\newcommand\COMP{\hbox{C\kern -.58em {\raise .54ex \hbox{$\scriptscriptstyle |$}}
\kern-.55em {\raise .53ex \hbox{$\scriptscriptstyle |$}} }}
\newcommand\NN{\hbox{I\kern-.2em\hbox{N}}}
\newcommand\RR{\hbox{I\kern-.2em\hbox{R}}}
\newcommand\sRR{{\it \hbox{I\kern-.2em\hbox{R}}}}
\newcommand\QQ{\hbox{I\kern-.53em\hbox{Q}}}
\newcommand\PP{\hbox{I\kern-.53em\hbox{P}}}
\newcommand\EE{\hbox{I\kern-.53em\hbox{E}}}
\newcommand\ZZ{{{\rm Z}\kern-.28em{\rm Z}}}
\newcommand\be{\begin{equation}}
\newcommand\ee{\end{equation}}
\newtheorem{theorem}{Theorem}[section]
\newtheorem{proposition}[theorem]{Proposition}
\newtheorem{remark}[theorem]{Remark}
\newtheorem{example}[theorem]{Example}
\newtheorem{lemma}[theorem]{Lemma}
\newtheorem{definition}[theorem]{Definitions}
\newtheorem{corollary}[theorem]{Corollary}
\newcommand\beq{\begin{eqnarray}}
\newcommand\eeq{\end{eqnarray}}
\newcommand\bq{\begin{eqnarray*}}
\newcommand\eq{\end{eqnarray*}}
\newcommand{\mm}{m}
\def\comg#1{\left ( #1\right )\!^{p,\mathbb G}}
\def\prof#1{ \phantom{l}^{p,\mathbb F}\!\left ( #1\right )}
\def\prog#1{ \phantom{l}^{p,\mathbb G}\!\left ( #1\right )}
\def \Lbrack {[\![}
\def \Rbrack {]\!]}
\def\mbf{\mathbb F}
\def\mbg{\mathbb G}
\newcommand*\bigcdot{\mathpalette\bigcdot@{.5}}
\newcommand*\bigcdot@[2]{\mathbin{\vcenter{\hbox{\scalebox{#2}{$\m@th#1\bullet$}}}}}
\newcommand{\is}{\bigcdot }
\begin{document}
\title{Structure conditions under progressively added information}

\author{Tahir Choulli \thanks{corresponding author, {Email: tchoulli@ualberta.ca} }}

\affil{Mathematical and Statistical Sciences Dept.\\University of Alberta, Edmonton, Canada}
\author{Jun Deng}

\affil{School of Banking and Finance, \\University of International Business and Economics, Beijing, China}

\maketitle

\begin{abstract} It has been understood that the ``local" existence of the Markowitz' optimal portfolio or the solution to the local-risk minimization problem is guaranteed by some specific mathematical structures on the underlying assets price processes  known in the literature as ``{\it Structure Conditions}". In this paper, we consider a semi-martingale market model, and an arbitrary random time that is not adapted to the information flow of the market model. This random time may model the default time of a firm, the death time of an insured, or any the occurrence time of an event that might impact the market model somehow. By adding additional uncertainty to the market model, via this random time, the {\it structures conditions} may fail and hence the Markowitz's optimal portfolio and other quadratic-optimal portfolios might fail to exist. Our aim is to investigate the impact of this random time on the structures conditions from different perspectives. Our analysis
allows us to conclude that under some mild assumptions on the market model and the random time, these structures conditions will remain valid on the one hand. Furthermore, we provide two examples illustrating the importance of these assumptions. On the other hand, we describe the random time models for which these structure conditions are preserved for any market model. These results are elaborated separately for the two contexts of stopping with the random time and incorporating totally a specific class of random times respectively.
\end{abstract}
\section{Introduction}
Since the seminal work of Markowitz on the optimal portfolio, the quadratic criterion for hedging contingent claims becomes very popular and an important topic in mathematical finance, modern finance, and insurance. In this context, two main competing quadratic approaches were suggested. Precisely, the local risk minimization and the mean variance hedging. For more details about these two methods and their relationship, we refer the reader to Health et al. \cite{healthplaten2001},  Cerny and Kallsen \cite{cernykallsen2007}, Duffie and Richardson \cite{duffine1991},
Delbaen and Schachermayer \cite{delbaenScha1996},  Biagini et al. \cite{biaginietal2000}, Jeanblanc et al. \cite{jeanblancschweizer2012},   Schweizer \cite{Schweizer1995, Schweizer1999},   Choulli et al. \cite{choullivandaeles2010}, Laurent and Pham \cite{laurentPham1999} and the references therein. \\

 One important common feature for these method lies in the assumptions that the market model should fulfill in order that the two methods admit solutions at least locally. These conditions are known by the ``Structure Conditions" (called SC hereafter) and sound to be the alternative to non-arbitrage condition in this quadratic context. Indeed, for the case of continuous price processes, it is proved that these conditions are equivalent to the non-
arbitrage of the first kind (No-Unbounded-Profit-with-Bounded-Risk, called NUPBR hereafter), or equivalently to the existence of a local martingale deflator for the market model. For details about these equivalence, we refer the reader to Choulli and Stricker \cite{stricker1990}.  However, in the general case, the two concepts (i.e. SC and NUPBR) differ tremendously.\\

\noindent Recently, there has been an upsurge interest  in  investigating  the effect of different information levels on arbitrage theory and utility maximization problem, see  \cite{aksamit/choulli/deng/jeanblanc}, \cite{Choulli2013}, \cite{fjs},  \cite{kohatsusulem06},   \cite{ais98} and the references therein.   From the economic standpoint of view, information is a commodity that bears values; and economic agents desire information because it helps them to make decision and maximize their state-dependent utilities, especially when they are facing uncertainties. For more details about this economic views we refer the reader to Allen \cite{allen} and Arrow \cite{arrow73, arrow84, arrow99} and the references therein.\\

\noindent In this paper, we study the impact of some extra information/uncertainty on the Structure Conditions. This extra information comes from a random time $\tau$ that is not adapted to the public information represented by a filtration $\mathbb{F}:=({\cal F}_t)_{t\geq 0}$. There are two mainstreams to combine the information coming from $\tau$ and $\mathbb{F}$: The initial enlargement and progressive enlargement of the filtration $\mathbb{F}$ (see \cite{Jeu}, \cite{jacod1},  \cite{Yor} and the references therein). Herein, we restrict our attention to adding the information from $\tau$ progressively to $\mathbb{F}$, and the resulting larger filtration will be denoted throughout the paper by $\mathbb{G}$. In this paper, we
are
dedicated to investigate the following two questions:\\

 \centerline{ \hspace*{0.5cm} For which pair $(\tau, S)$, does $(S, \mathbb{G})$ satisfy SC?\hfill   {{\bf
(Prob1) } } }
\vskip 0.15cm
and
\vskip 0.35cm

\centerline{ \hspace*{0.5cm}For which model of $\tau$, $(X, \mathbb{G})$ fulfills SC as long as $(X, \mathbb F)$ does? \hfill   {{\bf
(Prob2) } } }
\vskip 0.35cm

\noindent To answer the two problems  {\bf(Prob1)} and  {\bf(Prob2)}, we split the time horizon $\Lbrack 0,+\infty \Lbrack$ into two disjoint intervals $\Lbrack 0,\tau \Rbrack$ and $\Rbrack \tau, +\infty\Lbrack$. In other words, we investigate the impact of $\tau$ on the structures conditions of $S$ by studying $(S^\tau,\mathbb{G})$  and $(S - S^\tau, \mathbb{G})$  individually.\\

\noindent This paper contains four sections, including the current section,  and an appendix where we recall some useful results. Section \ref{subsectPreliminaries} defines the mathematical model, while Section \ref{StructureConditions4Stau} addresses the structure conditions for models stopped at $\tau$. The last section, Section \ref{StructureConditionsAftertau}, deals with the ``part-after-$\tau$".
\section{The mathematical model and preliminaries}\label{subsectPreliminaries}
Our mathematical model starts with a stochastic basis $(\Omega, {\cal A},  {\mathbb  F }=({\cal F}_t)_{t\geq 0},  \mathbb P)$,  where ${\mathbb  F }$ is a filtration satisfying the usual conditions of right continuity and completeness and represents the flow of ``public" information over time. On this filtered probability space, we consider given $S$, a {\bf one-dimensional special semimartingale} such that 
\begin{equation}\label{S-DoobMeyerDecomposition}
 S=S_0+M+A,
\end{equation}
 where $M$ is a locally square integrable local martingale and $A$ is predictable with finite variation. $S$ models the tradeable risky asset.The extension to multi-dimension case and/or to general semimartingales is very doable at the expenses of some technicalities that we avoid herein for the sake of well illustrating the main ideas. Thus,  $(\Omega, {\cal A}, {\mathbb  F },  S, \mathbb P)$ constitutes the initial market model. In addition to this model, we consider an $\cal A$-measurable random time $\tau: \Omega \rightarrow \mathbb{R}_+$ that is fixed from the beginning and for the entire paper. This random time can represent the agent's death time, the sudden retirement time, the bankruptcy time of a firm, the default time, or any time of occurrence of an event that might affect the initial market and/or the agents. Mathematically, this random time is not a stopping time with respect to $\mathbb F$, and hence one can not confirm at time $t$ whether this random time happened or not yet. However, using the flow of information $\mathbb F$, we can observe the survival probability of this random time. To formulate this rigorously, we
associate to $\tau$ the pair $(D, \mathbb G)$ given by
 \begin{equation}\label{AandfiltrationG}
D:= I_{\Lbrack\tau,+\infty\Lbrack},\ \ \ \mathbb G=\left({\cal G}_t\right)_{t\geq 0},\ \mbox{where} \ {\cal G}_t =
\displaystyle\bigcap_{s>t}\Bigl({\cal F}_s\vee \sigma(D_u, u\leq s)\Bigr).\end{equation}
The filtration $\mathbb G$ is the new flow of information where the occurrence of $\tau$ can be detected. Mathematically, $\mathbb G$ is the smallest
right-continuous filtration that contains ${\mathbb  F }$ and makes $\tau$ being a stopping time. In the probabilistic literature, $\mathbb G$ is called the progressive enlargement of $\mathbb F$ with $\tau$. In addition to $\mathbb G$ and $D$, we associate to $\tau$ two important $\mathbb F$-supermartingales given by
\begin{equation}\label{ZandZtilde}
Z_t := ^{o,\mathbb F}(I_{\Lbrack0,\tau\Lbrack})_t=P\left(\tau >t\ \big|\ {\cal F}_t\right)\ \ \mbox{ and }\ \ \ \widetilde Z_t:=^{o,\mathbb F}(I_{\Lbrack0,\tau\Rbrack})=P\left(\tau\geq t\ \Big|\ {\cal F}_t\right).
\end{equation}
The supermartingale $Z$ (known as Az\'{e}ma supermartingale) is right-continuous with left limits, while $\widetilde Z$ admits right limits and left limits only.  The decomposition of $Z$ leads to another important martingale $m$  by
\begin{equation}\label{processmZ}
m := Z+D^{o,\mathbb F},\end{equation}
 where $D^{o,\mathbb F}$ is the $\mathbb F$-dual optional projection of $D=I_{\Lbrack\tau, \infty\Lbrack}$. Furthermore, we have $\widetilde{Z}_+ = Z$ and $\widetilde{Z} = Z_{-} + \Delta m$.\\

\noindent Throughout the paper, the filtration $\mathbb H$ denotes an arbitrary filtration satisfying the usual conditions. As usual, the set of $\mathbb H$-martingales that are $p$-integrable $(p\geq 1$) will be denoted by ${\cal M}^p(\mathbb H)$ and ${\cal A}^+(\mathbb H)$ denotes the set of increasing,
right-continuous, $\mathbb H$-adapted and integrable processes. If ${\cal C}(\mathbb H)$ is a class of processes for the filtration $\mathbb H$,
 we denote by ${\cal C}_1(\mathbb H)$ the set of processes $X\in {\cal C}(\mathbb H)$ with $X_1=0$, and
 by ${\cal C}_{loc}$
 the set  of processes $X$
 such that there exists a sequence of $\mathbb H$-stopping times, $(T_n)_{n\geq 1}$, that increases to $+\infty$ and the stopped process $X^{T_n}$ belongs
to ${\cal C}(\mathbb H)$. We put $ {\cal C}_{0,loc}={\cal C}_1\cap{{\cal
C}}_{loc}$.\\

From time to time, we need to calculate predictable projection and compensator under different filtrations. To distinguish the effect, we shall use
$^{p, \mathbb{H}} \left( V \right)$ and $\left( V\right)^{p, \mathbb{H}}$ to specify the filtration $\mathbb{H}$. When the quadratic variation process $[\is, \is]$ is $\mathbb{H}$-locally integrable, we denote   $\langle \is, \is \rangle^{\mathbb{H}}$ as its compensator.\\
For any $\mathbb H$-locally square integrable local martingale $X$, we denote $L^2_{loc}(X,\mathbb H)$ the set of predictable processes $\theta$ that are $(X,\mathbb H$)-integrable (in the semimartingale sense) and the resulting integral $\theta\is X$ is $\mathbb H$-locally square integrable local martingale.\\


\noindent Below,we recall  the notion of {\it structure conditions} that we will address in this paper. It goes back to Schweizer \cite{Schweizer1995}.

\begin{definition}\label{DefinitionofSC} Let $X$ be an $\mathbb H$-adapted process. We say that $X$ satisfies the  {\it Structure Conditions} under $\mathbb H$ (or $(X, \mathbb{H})$ satisfies SC), if there exist $M\in {\cal{M}}_{0,loc}^2(P,{\mathbb  H })$ and $\lambda \in L_{loc}^2(M, {\mathbb  H})$ such that
\begin{eqnarray}
X = X_0 + M +\lambda   \is \langle  M, M \rangle^{\mathbb H}.
\end{eqnarray}
\end{definition}

\noindent For more details about structure conditions and other related properties, we refer the reader to Schweizer \cite{Schweizer1995, Schweizer1999}, Choulli and Stricker \cite{ChoulliStricker}, and the references therein.\\



 Below, we prove a simple but useful lemma for {\it structure conditions}.

\begin{lemma}\label{lem:predictableSCNUll}
  Let $V$ be an $\mathbb H$-predictable with finite variation process. Then, $(V, \mathbb H)$ satisfies SC if and only if $V$ is constant (i.e. $V_t\equiv V_0,\ t\geq 0$).
\end{lemma}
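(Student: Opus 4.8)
The plan is to prove the two implications separately; the ``if'' part is immediate and all the content lies in the ``only if'' part, whose one substantive ingredient is the classical fact that a predictable local martingale of finite variation is constant. For the ``if'' direction: if $V_t\equiv V_0$, then taking $M\equiv 0\in{\cal M}^2_{0,loc}(P,\mathbb H)$ and $\lambda\equiv 0\in L^2_{loc}(M,\mathbb H)$ gives $V=V_0+M-\lambda\centerdot\langle M,M\rangle^{\mathbb H}$ trivially, so $V$ satisfies SC$(\mathbb H)$.

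For the ``only if'' direction, suppose $V$ is $\mathbb H$-predictable of finite variation and satisfies SC$(\mathbb H)$, so there exist $M\in{\cal M}^2_{0,loc}(P,\mathbb H)$ and $\lambda\in L^2_{loc}(M,\mathbb H)$ with $V=V_0+M-\lambda\centerdot\langle M,M\rangle^{\mathbb H}$. I would rearrange this to
\[
M=(V-V_0)+\lambda\centerdot\langle M,M\rangle^{\mathbb H},
\]
and then observe that the right-hand side is an $\mathbb H$-predictable process of finite variation: $V-V_0$ is so by hypothesis, while $\langle M,M\rangle^{\mathbb H}$ is predictable and increasing and, since $\lambda\in L^2_{loc}(M,\mathbb H)$, the Kunita--Watanabe / Cauchy--Schwarz inequality shows that the Stieltjes integral $\lambda\centerdot\langle M,M\rangle^{\mathbb H}$ has (locally integrable) finite variation and is predictable. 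Hence $M$ is simultaneously a locally square integrable local martingale null at $0$ and an $\mathbb H$-predictable process of finite variation.

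The final step is to invoke the classical uniqueness statement that a predictable local martingale of finite variation is indistinguishable from its initial value, which forces $M\equiv 0$. (If a self-contained argument is preferred: a predictable local martingale has no jumps --- at a predictable time its jump is $\mathcal F_{\cdot-}$-measurable with vanishing $\mathcal F_{\cdot-}$-conditional expectation, and its jump times can be exhausted by predictable times --- so it is continuous, and a continuous local martingale of finite variation null at $0$ is $0$.) Plugging $M\equiv 0$ back into the displayed identity gives $\langle M,M\rangle^{\mathbb H}=0$, hence $V-V_0=-\lambda\centerdot\langle M,M\rangle^{\mathbb H}=0$, i.e. $V_t\equiv V_0$.

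I do not anticipate a genuine obstacle here: the only substantive input is the quoted property of predictable finite-variation local martingales, and the one mildly technical point --- verifying that $\lambda\centerdot\langle M,M\rangle^{\mathbb H}$ really has finite variation so that the rearranged equation exhibits $M$ as a predictable finite-variation process --- is routine.
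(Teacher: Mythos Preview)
Your proof is correct and follows essentially the same route as the paper: both rearrange the SC decomposition to exhibit $M$ as an $\mathbb H$-predictable local martingale of finite variation, invoke the classical fact that such a process is null, and conclude $V\equiv V_0$. You are simply more explicit than the paper in treating the trivial ``if'' direction and in justifying that $\lambda\centerdot\langle M,M\rangle^{\mathbb H}$ has finite variation, but no new idea is involved.
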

\begin{proof}
  If $(V, \mathbb H)$ satisfies SC, then there exist an $\mathbb H$-local martingale $M^V$ and an $\mathbb H$-predictable process $\lambda^{\mathbb H} \in L^2_{loc}(M, \mathbb H)$ such that $V = V_0 + M^V + \lambda^{\mathbb H} \is\langle M^V,M^V\rangle^{\mathbb H}$. Therefore, $M^V$ is an $\mathbb H$-predictable  local martingale with finite variation. Hence $M$ is null, and $V\equiv V_0.$ This ends the proof of the lemma.\end{proof}
  The following lemma explains why when dealing with the Structure Conditions for $(S,\mathbb G)$, one can split the study into two separate cases. Precisely,  $(S,\mathbb G)$ satisfies SC is equivalent to both $(S^{\tau},\mathbb G)$ and $(S-S^{\tau},\mathbb G)$ fulfills SC.

  \begin{lemma}\label{SC2cases} The following assertions hold.\\
  {\rm{(a)}}  Let $\sigma$ be an $\mathbb H$-stopping time. Then 
 $(X,\mathbb H)$ satisfies the Structure Conditions if and only if both $(X^{\sigma},\mathbb H)$ and $(X-X^{\sigma},\mathbb H)$ do.\\
   {\rm{(b)}}    $(X,\mathbb H)$ satisfies the Structure Conditions if and only if  there exists a sequence of $\mathbb H$-stopping times that increases to infinity, $(\sigma_n)_{n\geq 1}$ such that  $(X^{\sigma_n},\mathbb H)$ satisfies the Structure Conditions for any $n\geq 1$. 
\end{lemma}
\begin{proof}
  The proof follows immediately from the definition, and will be omitted. \end{proof}

  \begin{definition}\label{GKWdecomposition}
  Let $M$ and $N$ two $\mathbb H$-local martingales with $N$ being locally square integrable. We say that $M$ admits the Galtchouk-Kunita-Watanabe decomposition, called hereafter by GKW-decomposition, with respect to $N$ if there exist a $\varphi\in L_{loc}^2(N,\mathbb H)$ and an $\mathbb H$-local martingale $L$  such that $[L, N]$ is a local martingale and 
  \begin{eqnarray*}
  M=M_0+\varphi\is N+L.\end{eqnarray*}
  \end{definition}
  
  It is well known nowadays, that this decomposition always hold when both processes $M$ and $N$ are locally square integrable local martingales. For more details about the GKW-decomposition, we refer the reader to Ansel and Stricker \cite{Ansel92, Ansel94}, and more about its applications to no-arbitrage conditions and/or other market's viability conditions we refer the reader to \cite{ChoulliStricker}.  Thus, as a direct consequence of this we state the following relationship between the SC and the  no-arbitrage notion of No-Unbounded-Profit-with-Bounded-Risk (NUPBR hereafter). NUPBR is the necessary and sufficient for the market's viability when financial agents have strictly concave utilities (the non-quadratic setting). In a series of papers, this notion of NUPBR had been deeply investigated when the filtration $\mathbb F$ is enlarged progressively with a random time. Furthermore, NUPBR and the SC notions coincide for the case when $S$ is a continuous process, while they might differ tremendously in the general case, as it is well explained in Choulli et al. \cite{ChoulliStricker}. To recall this last point, we borrow the following example from \cite{ChoulliStricker}.
  \begin{example}\label{CS96} Let $p$ be Poisson process win intensity one, $N$ is the compensated Poisson process (i.e. $N_t:=p_t-t$), and $f$ be nonnegative element of $L^1([0,1], ds)\setminus L^2([0,1], ds)$ (i.e. $\int_0^1 f(s) ds<+\infty=\int_0^1 f(s)^2 ds$).  Suppose that the filtration $\mathbb F$ is the augmented natural filtration of the Poisson process. Then the following model 
  \begin{eqnarray*}
  (S,\mathbb F),\quad\mbox{ where}\quad S:=N-\int_0^{\cdot} f(s)ds,
  \end{eqnarray*}
  satisfies NUPBR while it fails SC.
  \end{example}
 Thus, this example clearly states that NUPBR does not imply SC in general. By using the same framework and choosing $S$ to be the Poisson process itself instead, we can easily conclude also that $S$ satisfies SC but fails NUPBR. However, the relationship between NUPBR and SC can be elaborated via the GKW-decomposition as follows.
  
  \begin{theorem}\label{NUPNR/SC} Suppose that $(X,\mathbb H)$ satisfies the assumption (\ref{S-DoobMeyerDecomposition}) (i.e. $\sup_{0\leq t\leq .}X_t^2\in {\cal A}^+_{loc}(\mathbb H)$), and there exists a positive $\mathbb H$-local martingale $Z$ such that $ZX$ is a local martingale. If $Z$ admits the GKW-decomposition with respect to $M^X$  (the local martingale part of  $X$), then  $(X,\mathbb H)$  fulfills SC. \\
  In particular,  $(X, \mathbb H)$ fulfills SC as long as there exists a positive $Z\in {\cal M}_{loc}^2(\mathbb F)$ such that $ZX$ is a local martingale.
   \end{theorem}

\begin{proof} On the one hand, by applying  the GKW-decomposition of $N$ with respect to $M$, which exists by assumption, we obtain  $\lambda\in L^2_{loc}(M)$ such that  $\langle N,M\rangle=\lambda\centerdot \langle M\rangle$. On the other hand, since $X$ is a special semimartingale, having the Doob-Meyer decomposition $X+X_0+M+A$, a direct  application of It\^o formula to ${\cal E}(N)X$, we deduce that 
$$X=X_0+M-\langle N,M\rangle^{\mathbb H}=X_0+M-\lambda\centerdot \langle M\rangle.$$
 This ends the proof of the theorem.
\end{proof}


\section{Structure conditions under random horizon}\label{StructureConditions4Stau}

This section investigates and quantifies the effect of stopping at $\tau$ on the structure conditions in two different manners. On the one hand, we provide sufficient and necessary conditions on the pair $(\tau, S)$ for which the Structure Conditions hold for  $ (S^{\tau},\mathbb G)$. This answers partially the problem {\bf(Prob1)}. On the other hand, we give necessary and sufficient conditions on $\tau$ for which $ (X^{\tau},\mathbb G)$ fulfills the structure conditions as long as the model $ (X,\mathbb  F)$ does.  This section contains four subsections. The first subsection analyzes two classes of two $\mathbb G$-local martingales and their properties. The second subsection investigates the Galtchouk-Kunita-Watanabe decomposition for one of the classes of $\mathbb G$-local martingale of the first subsection. The third and fourth subsections addresses the structure conditions for $(S^{\tau},\mathbb G)$ for particular cases and general case respectively.

\subsection{Three transformation operators on $\mathbb F$-local martingales}\label{Sub3.1}

Here, we recall three transformation operators on $\mathbb F$-local martingales that play important and natural r\^oles  when investigating structure  conditions for $(S^{\tau},\mathbb G)$. Then we discuss some of their properties that will be useful throughout the rest of the section.

\begin{proposition}\label{DoobDecomposition4G}
 Let  $X\in {\cal M}_{0,loc}(\mathbb  F )$.   Then the process ${\cal T}_0(X)$ given by 
 \begin{eqnarray}\label{Mzero}
{\cal T}_0(X):=I_{\{Z_{-}>0\}}\is X-\sum \Delta X I_{\{\widetilde Z=0<Z_{-}\}}+\left(\sum \Delta X I_{\{\widetilde Z=0<Z_{-}\}}\right)^{p,\mathbb F},
\end{eqnarray}
belongs to ${\cal M}_{0,loc}(\mathbb  F )$, and the processes $\widehat{X}^{(b)}$ and ${\cal T}_b(X)$   defined by
\begin{eqnarray}\label{honestforebforeprocess}
\widehat{X}^{(b)}  &&:= X^{ \tau}  - I_{\Rbrack 0, \tau\Rbrack} Z_{-}^{-1}\is\langle X, \mm\rangle^{\mathbb F},\nonumber\\
{\cal T}_b(X)&&:=\widehat{X}^{(b)}-{1\over{\widetilde Z}}I_{\Rbrack 0, \tau\Rbrack}\is[X, m]+{1\over{Z_{-}}}I_{\Rbrack 0, \tau\Rbrack}\is\left(I_{\{\widetilde Z>0\}}\is [X,m]\right)^{p,\mathbb F},\\\nonumber
&&=X^{\tau}-{1\over{\widetilde Z}}I_{\Rbrack 0, \tau\Rbrack}\is[X, m]+I_{\Rbrack 0, \tau\Rbrack}\is\left(\sum I_{\{\widetilde Z=0<Z_{-}\}}\Delta X\right)^{p,\mathbb F},
 \end{eqnarray}
  are $\mathbb{G}$-local martingales. Furthermore, ${\cal T}_0(X)\in {\cal M}^2_{loc}(\mathbb F)$ and $\widehat{X}^{(b)}\in {\cal M}^2_{loc}(\mathbb G)$ when $X\in {\cal M}^2_{loc}(\mathbb F)$. 
\end{proposition}
\begin{proof} The proof for ${\cal T}_0(X)\in {\cal M}_{0, loc}^2(\mathbb F)$ for any $X\in  {\cal M}_{0,loc}^2(\mathbb F)$  is obvious, and will be omitted. 
The proof for $\widehat{X}^{(b)}  $  being a $\mathbb G$-local martingale  can be found in Jeulin \cite[Proposition (4.16)]{Jeu} and \cite[XX.76]{DMM}, while the proof of ${\cal T}_b(X)\in {\cal M}_{0loc}(\mathbb G)$  is given by \cite[Theorem 3]{ACJ2015}. Suppose that  $X\in {\cal M}^2_{loc}(\mathbb F)$. Then it is clear that the $\mathbb G$-predictable process with finite variation  $ I_{\Rbrack 0, \tau\Rbrack} Z_{-}^{-1}\is\langle X, \mm\rangle^{\mathbb F}$ is $\mathbb G$-locally bounded. Thus, the condition  $\widehat{X}^{(b)}  \in {\cal M}^2_{loc}(\mathbb F)$ reduces to the fact that $\displaystyle\sup_{0\leq s\leq \cdot} X_{s\wedge \tau}^2\in {\cal A}^+_{loc}(\mathbb G)$. This is always true when   $X\in {\cal M}^2_{loc}(\mathbb F)$, due to $\displaystyle\sup_{0\leq s\leq \cdot} X_{s\wedge \tau}^2\leq \sup_{0\leq s\leq \cdot} X_{s}^2 \in {\cal A}^+_{loc}(\mathbb F)\subset {\cal A}^+_{loc}(\mathbb G)$. This ends the proof of the proposition. 
 \end{proof}
The following describes the interplay between the three tranformations.
\begin{proposition}\label{prooTbXcompensators} Let $X, Y\in {\cal M}_{0,loc}(\mathbb F)$ and quasi-left-continuous.  Then the following hold.\\
{\rm (a)}  We  have 
\begin{equation}\label{XhatYbarBefore}
[\widehat X^{(b)}, {\cal T}_b(Y)]= [\widehat Y^{(b)}, {\cal T}_b(X)]={{Z_{-}}\over{\widetilde Z}}I_{\Rbrack 0, \tau\Rbrack} \is [Y, X].
\end{equation}
As a result, the process $\langle\widehat X^{(b)}, {\cal T}_b(Y)\rangle^{\mathbb G}$ always exists when $X,Y\in {\cal M}_{0,loc}^2(\mathbb F)$, and is given by 
\begin{eqnarray}\label{<x,Y>G}
\langle \widehat X^{(b)}, {\cal T}_b(Y)\rangle^{\mathbb G}=I_{\Rbrack 0, \tau\Rbrack} \is\langle {\cal T}_0(X), Y\rangle^{\mathbb F}= I_{\Rbrack 0, \tau\Rbrack} \is\langle {\cal T}_0(Y), {\cal T}_0( X)\rangle^{\mathbb F}.
\end{eqnarray}
{\rm (b)} The following equalities hold
\begin{eqnarray}
\langle \widehat{X}^{(b)},  \widehat{X}^{(b)} \rangle^{\mathbb G} &&= I_{\Rbrack 0, \tau\Rbrack}  \frac{1}{Z_{-}} \is 
\left( \widetilde{Z}\is [X, X] \right)^{\mathbb F},  \label{mmhatbefore}\\
{\cal T}_b\left({\cal T}_0(M)\right)&&= {\cal T}_b(M),\quad\mbox{and}\quad  \widehat{{\cal T}_0(M)}^{(b)}= {\widehat M}^{(b)}.\label{TbmzeroandTbm}
\end{eqnarray}
\end{proposition}

\begin{proof}   Thanks to the quasi-left-continuity of $X$ and $Y$, both processes $ I_{\Rbrack 0, \tau\Rbrack} Z_{-}^{-1}\is\langle X, \mm\rangle^{\mathbb F}$ and ${1\over{Z_{-}}}I_{\Rbrack 0, \tau\Rbrack}\is\left(I_{\{\widetilde Z>0\}}\is [Y,m]\right)^{p,\mathbb F}$ are continuous with finite variations. Thus, we derive 
\begin{eqnarray*}
[\widehat X^{(b)}, {\cal T}_b(Y)] &= & \left[X^{ \tau}  - I_{\Rbrack 0, \tau\Rbrack} Z_{-}^{-1}\is\langle X, \mm\rangle^{\mathbb F},
\widehat{Y}^{(b)}-{1\over{\widetilde Z}}I_{\Rbrack 0, \tau\Rbrack}\is[Y, m]+{1\over{Z_{-}}}I_{\Rbrack 0, \tau\Rbrack}\is\left(I_{\{\widetilde Z>0\}}\is [Y,m]\right)^{p,\mathbb F}
\right]\\
&= & \left[X^{ \tau} , 
 Y^{ \tau} - {1\over{\widetilde Z}}I_{\Rbrack 0, \tau\Rbrack}\is[Y, m]
\right]={{Z_{-}}\over{\widetilde Z}}I_{\Rbrack 0, \tau\Rbrack} \is [Y, X].
\end{eqnarray*}
The symmetric role of $X$ and $Y$ leads to $ [\widehat Y^{(b)}, {\cal T}_b(X)]={{Z_{-}}\over{\widetilde Z}}I_{\Rbrack 0, \tau\Rbrack} \is [Y, X]$. 
This proves (\ref{XhatYbarBefore}). To prove the second statement of assertion (a), we combine equation (\ref{XhatYbarBefore}) and Lemma \ref{lemmecrucial} and derive
\begin{eqnarray*}
\langle \widehat X^{(b)}, {\cal T}_b(Y) \rangle^{\mathbb G} &=& \left( {{Z_{-}}\over{\widetilde Z}}I_{\Rbrack 0, \tau\Rbrack} \is [X, Y] \right)^{p,\mathbb G}  = I_{\Rbrack 0, \tau\Rbrack} \is \left( I_{\{\widetilde{Z} >0 \}} \is [X,Y] \right)^{p,\mathbb F}\\
&=& I_{\Rbrack 0, \tau\Rbrack} \is \langle {\cal T}_0(X), Y\rangle^{\mathbb F}= I_{\Rbrack 0, \tau\Rbrack} \is \langle {\cal T}_0(X),  {\cal T}_0(Y)\rangle^{\mathbb F}.
\end{eqnarray*}
This ends the proof of assertion (a). Thanks again to the quasi-left-continuity of $X$, we have 
\begin{eqnarray*}
\left[ \widehat{X}^{(b)}, \widehat{X}^{(b)}\right] &=& 
\left[X^{ \tau}  - I_{\Rbrack 0, \tau\Rbrack} Z_{-}^{-1}\is\langle X, \mm\rangle^{\mathbb F}, X^{ \tau}  - I_{\Rbrack 0, \tau\Rbrack} Z_{-}^{-1}\is\langle X, \mm\rangle^{\mathbb F}\right] \\
&=& I_{\Rbrack 0, \tau\Rbrack}\is [X,X].
\end{eqnarray*}
Thus, from combining this with Lemma \ref{lemmecrucial}, (\ref{mmhatbefore}) follows immediately, while (\ref{TbmzeroandTbm}) is a direct consequence of $I_{\{\widetilde Z=0\}}I_{\Lbrack0,\tau\Rbrack}\equiv 0$. This ends the proof of the proposition.\end{proof}

The following proposition connects the integrability with respect to $({\widehat X}^{(b)}, \mathbb G)$ to that with respect to $({\cal T}_0(X),\mathbb F)$, for any quasi-left-continuous $X\in {\cal M}_{loc}(\mathbb F)$. 
\begin{proposition}\label{propositionmzerotaob}  Let $M\in {\cal M}_{0,loc}^2(\mathbb  F)$ be given by (\ref{S-DoobMeyerDecomposition}), and suppose that $M$ is quasi-left-continuous.Then there exists a unique ($P\otimes d\langle {\cal T}_0(M)\rangle^{\mathbb F}$-a.e.) $\mathbb F$-predictable process $\widetilde\varphi$ such that
\begin{eqnarray}\label{processVarphi}
I_{\{Z_{-}>0\}}\is\left(\widetilde Z\is [M,M]\right)^{p,\mathbb F}=\widetilde\varphi\is \langle {\cal T}_0(M)\rangle^{\mathbb F},\quad \mbox{and}\quad0\leq \widetilde\varphi \leq 1. \end{eqnarray}
Furthermore, the following assertions hold.\\
{\rm{(a)}} It holds that  $I_{\{\widetilde\varphi=0\}}\is{\widehat M}^{(b)}\equiv 0$, and  $\displaystyle{{\theta}\over{\sqrt{\widetilde\varphi}}}I_{\{\widetilde\varphi>0\}}\in L^2_{loc}({\widehat M}^{(b)}, \mathbb G)$, for any $\theta\in  L^2_{loc}(M, \mathbb F).$\\
{\rm{(b)}}  Let $\theta^{\mathbb G}$ be a $\mathbb G$-predictable process. Then $\theta^{\mathbb G}\in L^2_{loc}({\widehat M}^{(b)}, \mathbb G)$ if and only if there exists and $\mathbb F$-predictable process $\theta$ such that $\theta^{\mathbb G}=\theta$ on $\Rbrack0,\tau\Rbrack$, and 
$\theta\sqrt{\widetilde\varphi}I_{\{Z_{-}\geq \delta\}}\in L^2_{loc}({\cal T}_0(M),\mathbb F)$ for any $\delta>0$. 
\end{proposition}
\begin{proof} 
To prove the existence of $\widetilde\varphi$ satisfying (\ref{processVarphi}), we consider
\begin{eqnarray*}
N:=Z_{-}\is M+I_{\{Z_{-}>0\}}\is [M,m]-I_{\{Z_{-}>0\}}\is \langle M, m\rangle^{\mathbb F},
\end{eqnarray*}
which is a locally square integrable and quasi-left-continuous  $\mathbb F$-local martingale and its jumps are given by $\Delta N = \widetilde{Z} \Delta M $. Thus, a direct application of the GKW-decomposition leads to the existence of a unique pair $(\widetilde\varphi, L)\in L^2_{loc}( {\cal T}_0(M),\mathbb F)\times {\cal M}_{0,loc}^2(\mathbb F)$ such that 
\begin{eqnarray*}
N={\widetilde\varphi}\is {\cal T}_0(M)+L,\quad \langle {\cal T}_0(M), L\rangle^{\mathbb F}\equiv 0.\end{eqnarray*}
From the definition of $N$, we easily derive  
\begin{eqnarray*}
[N,{\cal T}_0(M)] =  [N,M]= \widetilde{Z} I_{\{Z_{-}>0\}}\is [M, M].
\end{eqnarray*}
Thus, by compensating above on both sides, the first equality in (\ref{processVarphi}) follows immediately. Then due to the fact that  $\widetilde Z I_{\{Z_{-}>0\}}\is [M,M]$ is nondecreasing and is dominated by $[{\cal T}_0(M),{\cal T}_0(M)]$, we deduce that $0\leq\widetilde\varphi\leq 1$, and (\ref{processVarphi}) is proved. \\
(a) To prove assertion (a), on the one hand, we remark that 
\begin{eqnarray*}
\langle I_{\{\widetilde\varphi=0\}}\is{\widehat M}^{(b)}\rangle^{\mathbb G}=I_{\{\widetilde\varphi=0\}}\is\langle {\widehat M}^{(b)}\rangle^{\mathbb G}={{{\widetilde\varphi}}\over{Z_{-}}}I_{\{\widetilde\varphi=0\}}I_{\Rbrack0,\tau\Rbrack}\is\langle {\cal T}_0( M)\rangle^{\mathbb F}\equiv 0.
\end{eqnarray*}
On the other hand, we calculate 
\begin{eqnarray*}
\langle{{\theta}\over{\sqrt{\widetilde\varphi}}}I_{\{\widetilde\varphi>0\}}\is {\widehat M}^{(b)}\rangle^{\mathbb G}={{\theta^2}\over{\widetilde\varphi}}I_{\{\widetilde\varphi>0\}}\is\langle {\widehat M}^{(b)}\rangle^{\mathbb G}={{\theta^2}\over{Z_{-}}}I_{\{\widetilde\varphi>0\}}I_{\Rbrack0,\tau\Rbrack}\is\langle {\cal T}_0( M)\rangle^{\mathbb F}.\end{eqnarray*}
Therefore, assertion (a) follows immediately from these equalities,  . \\
(b) Notice the equality
\begin{eqnarray*}
(\theta^{\mathbb G})^2 \is \langle {\widehat M}^{(b)} \rangle^{\mathbb G} &=& 
\frac{\theta^2}{Z_{-} } I_{\Rbrack0,\tau\Rbrack}\is \left(\widetilde Z\is [M,M]\right)^{p,\mathbb F}
=\frac{(\theta\sqrt{\widetilde\varphi})^2}{Z_{-} } I_{\Rbrack 0,\tau\Rbrack}\is \langle  {\cal T}_0( M) \rangle^{\mathbb F}.
\end{eqnarray*}
Then assertion (b) follows from combining this equality with and  Proposition \ref{G/Flocalization}-(c).
This ends the proof of the proposition.
 \end{proof} 
\subsection{Galtchouk-Kunita-Watanabe for a class of $\mathbb G$-local martingales}\label{Sub3.2}
This subsection investigates the GKW-decomposition of ${\cal T}_b(M)$ with respect to ${\widehat M}^{(b)}$, when $M$ belongs to $ {\cal M}_{0,loc}^2(\mathbb F)$ . In fact, we prove that this decomposition always holds for $ \sqrt{\widetilde\varphi}\is{\cal T}_b(M)$ instead of ${\cal T}_b(M)$.

\begin{theorem}\label{GKW4G}  Let $M\in {\cal M}_{0,loc}^2(\mathbb  F)$ be given by (\ref{S-DoobMeyerDecomposition}), and ${\cal T}_0(M)$ be given by (\ref{Mzero}). Suppose that $M$ is quasi-left-continuous, Then the following assertions hold.\\
 {\rm{(a)}}  There exists $\widetilde L\in {\cal M}_{0,loc}(\mathbb G)$ such that $[\widetilde L, {\widehat M}^{(b)}]\in {\cal M}_{0,loc}(\mathbb G)$  and 
 \begin{eqnarray}\label{GKW4Tb(M)}
 \sqrt{\widetilde\varphi}\is{\cal T}_b(M)={{Z_{-}}\over{\sqrt{\widetilde\varphi}}} I_{\{\widetilde\varphi>0\}}\is {\widehat M}^{(b)}+\sqrt{\widetilde\varphi}\is\widetilde{L}.
 \end{eqnarray}
 {\rm{(b)}}  For any quasi-left-continuous $N\in {\cal M}_{0,loc}^2(\mathbb F)$, there exist unique pair $(\theta_N, L_N)$ that belongs to  $L^2_{loc}({\cal T}_0(M),\mathbb F)\times {\cal M}_{0,loc}(\mathbb G)$ and satisfies  
 \begin{eqnarray}\label{GKW4Tb(N)}
\sqrt{\widetilde\varphi}\is {\cal T}_b(N)={{Z_{-}\theta_N}\over{\sqrt{\widetilde\varphi}}}I_{\{\widetilde\varphi>0\}}\is {\widehat M}^{(b)}+\sqrt{\widetilde\varphi}\is L_N,\quad [L_N, {\widehat M}^{(b)}]\in {\cal M}_{0,loc}(\mathbb G),
 \end{eqnarray}
 with 
 \begin{eqnarray}
 \theta_N :=  \frac{{d\langle N, {\cal T}_0(M)}\rangle^{\mathbb F}}{d\langle {\cal T}_0(M)\rangle^{\mathbb F}}.
 \end{eqnarray}
\end{theorem}


\begin{proof}
For any locally square integrable    $\mathbb F$-local martingale $N$,  denote  ${\cal T}_{(b,n)}(N)$ as 
\begin{eqnarray}\label{Nn}
{\cal T}_{(b,n)}(N):=\widehat{N}^{(b)} -{I_{\Rbrack 0, \tau\Rbrack}\over{Z_{-}}}\is{\cal T}_b(N_n), \mbox{where} \ N_n:=I_{\{\widetilde Z\geq n^{-1}\}}\is[N, m]-
\left(I_{\{\widetilde Z\geq n^{-1}\}}\is [N,m]\right)^{p,\mathbb F}.
\end{eqnarray} 
   It is clear that ${\cal T}_{(b,n)}(N)$ converges in ${\cal M}_{loc}(\mathbb G)$ to ${\cal T}_{b}(N)$, $N_n$ converges in  ${\cal M}_{loc}^2(\mathbb G)$, and $\langle  {\cal T}_{(b,n)}(N) , \widehat M^{(b)}\rangle^{\mathbb G}$ converges to  $\langle  {\cal T}_{b}(N) , \widehat M^{(b)}\rangle^{\mathbb G}$. Furthermore, a direct application of GKW-decomposition of ${\cal T}_{(b,n)}(N)$ with respect to ${\widetilde\varphi}^{-1/2}\is {\widehat M}^{(b)}$, since both processes belong to ${\cal M}_{loc}^2(\mathbb G)$, leads to  
 \begin{eqnarray}\label{GKW1}
 {\cal T}_{(b,n)}(N)={{\theta_n}\over{\sqrt{\widetilde\varphi}}}\is {\widehat M}^{(b)}+L_n.\end{eqnarray}
 Here $\theta_n$ is an $\mathbb F$-predictable process  such that $\theta_n\in L^2_{loc}({\widetilde\varphi}^{-1/2}\is {\widehat M}^{(b)}, \mathbb G)$ and $L_n\in {\cal M}^2_{loc}(\mathbb G)$  with  $\langle L_n, {\widehat M}^{(b)}\rangle^{\mathbb G}\equiv 0$.  As a result, see Proposition \ref{propositionmzerotaob}-(b), we conclude that $\theta_n I_{\{ Z_{-}\geq\delta\}}\in  L^2_{loc}( {\cal T}_0(M), \mathbb F)$, for any $\delta\in (0,1)$. \\
Now, we focus on proving that $\theta_n $ indeed converges to $\widetilde\theta$ in  $L^2_{loc}({\widehat M}^{(b)}, \mathbb G)$, or equivalently $\theta_n\is {\widehat M}^{(b)}$ converges in the space ${\cal M}_{0,loc}^2(\mathbb G)$. To this end, we consider an $\mathbb F$-predictable process $\eta \in L^2_{loc}({\widehat M}^{(b)}, \mathbb G)$, which is equivalent to $\eta \sqrt{\widetilde\varphi} I_{\{ Z_{-}\geq\delta\}}\in L^2_{loc}({\cal T}_0(M),\mathbb F)$ for any $\delta\in (0,1)$, and  derive
\begin{eqnarray*}
(\theta_{n+k}-\theta_n) \eta \is\langle {\widehat M}^{(b)}  \rangle^{\mathbb G}&=&{{\theta_{n+k}-\theta_n}\over{\sqrt{\widetilde\varphi}}}\eta \sqrt{\widetilde\varphi}\is \langle {\widehat M}^{(b)} \rangle^{\mathbb G} \\
&=&\langle {{\theta_{n+k}-\theta_n}\over{\sqrt{\widetilde\varphi}}}I_{\{\widetilde\varphi>0\}}\is{\widehat M}^{(b)}, \eta \sqrt{\widetilde\varphi}\is{\widehat M}^{(b)}\rangle^{\mathbb G} \\
&=&  \langle {\cal T}_{(b,n+k)}(N)-{\cal T}_{(b,n)}(N), \widehat{\eta \sqrt{\widetilde\varphi}\is M}\rangle^{\mathbb G}  \\
&=&  \langle {\cal T}_{b}(N_{n+k}-N_{n}), \widehat{\eta \sqrt{\widetilde\varphi}\is{\cal T}_0( M)}\rangle^{\mathbb G}\\
&=&  I_{\Rbrack 0, \tau\Rbrack}\is \langle N_{n+k}-N_{n}, \eta \sqrt{\widetilde\varphi}\is{\cal T}_0( M)\rangle^{\mathbb F} .
\end{eqnarray*}
Since  $\eta \sqrt{\widetilde\varphi}\is {\cal T}_0( M)\in {\cal M}_{0,loc}^2(\mathbb F)$ and the sequence $N_n$ converges  in ${\cal M}_{0,loc}^2(\mathbb F)$, we deduce that $(\theta_n)_n$ is a Cauchy sequence in $L^2_{loc}({\widehat M}^{(b)}, \mathbb G)$, and hence there exists a unique $\widetilde\theta\in L^2_{loc}({\widehat M}^{(b)}, \mathbb G)$ such that $\theta_n$ converges to $\widetilde\theta$ in $L^2_{loc}({\widehat M}^{(b)}, \mathbb G)$. 
Since  ${\cal T}_{(b,n)}(N)$ converges to ${\cal T}_{b}(N)$ in the space ${\cal M}_{0,loc}(\mathbb G)$, we deduce that $L_n$ converges in ${\cal M}_{0,loc}(\mathbb G)$ to $L\in {\cal M}_{0,loc}(\mathbb G)$. It is easy to check that $[L, {\widehat M}^{(b)} ]\in {\cal M}_{0,loc}(\mathbb G)$, and 
\begin{eqnarray}\label{GKW1bis}
\sqrt{\widetilde\varphi}\is {\cal T}_{b}(N)={\widetilde\theta}\is{\widehat M}^{(b)}+\sqrt{\widetilde\varphi}\is L.
\end{eqnarray} 
Then by considering $N=M$ and using Propositions \ref{prooTbXcompensators} and  \ref{propositionmzerotaob}, we derive 
\begin{eqnarray*}
{\widetilde\theta}{{\widetilde\varphi}\over{Z_{-}}}  I_{\Rbrack 0, \tau\Rbrack}\is\langle {\cal T}_0(M)\rangle^{\mathbb F}={\widetilde\theta}\is\langle {\widehat M}^{(b)}\rangle^{\mathbb G}= \langle \sqrt{\widetilde\varphi}\is {\cal T}_{b}(M), {\widehat M}^{(b)} \rangle^{\mathbb G}= \sqrt{\widetilde\varphi}\is \langle{\cal T}_{b}(M), {\widehat M}^{(b)} \rangle^{\mathbb G}= \sqrt{\widetilde\varphi} I_{\Rbrack 0, \tau\Rbrack}\is\langle {\cal T}_0(M)\rangle^{\mathbb F}.
\end{eqnarray*}
This proves that $\widetilde\theta$ coincides with $Z_{-}/\sqrt{\widetilde\varphi}$ on $\Rbrack 0, \tau\Rbrack$, and hence (\ref{GKW1bis}) for $N=M$ becomes 
\begin{eqnarray*}
\sqrt{\widetilde\varphi}\is {\cal T}_{b}(N)={{Z_{-}}\over{\sqrt{\widetilde\varphi}}}\is{\widehat M}^{(b)}+\sqrt{\widetilde\varphi}\is L.
\end{eqnarray*}
This proves assertion (a). To prove assertion (b), we consider $N\in {\cal M}_{loc}^2(\mathbb F)$, and we apply the GKW-decomposition  for $N$ with respect to ${\cal T}_0(M)$. This implies the existence of unique pair $(\theta_N, L_0)$ that belongs to $ L^2_{loc}( {\cal T}_0(M),\mathbb F)\times {\cal M}_{0,loc}^2(\mathbb F)$ and satisfies
\begin{eqnarray*}
N=\theta_N\is {\cal T}_0(M)+L_0,\quad \langle {\cal T}_0(M),L_0\rangle^{\mathbb F}\equiv 0.
\end{eqnarray*}
Remark that  we have ${\cal T}_b(N)= \theta_N\is {\cal T}_b({\cal T}_0(M))+{\cal T}_b(L_0)$, ${\cal T}_b({\cal T}_0(M))={\cal T}_b(M)$, and 
\begin{eqnarray*}
\langle {\cal T}_b(L_0), {\widehat M}^{(b)} \rangle^{\mathbb G}= I_{\Rbrack 0, \tau\Rbrack}\is\langle L_0, {\cal T}_0(M)\rangle^{\mathbb F}\equiv 0.
\end{eqnarray*}
Thus, by putting $L_N:={\cal T}_b(L_0)+\theta_N\is {\widetilde L}$, we deduce that $\langle L_N, {\widehat M}^{(b)}\rangle^{\mathbb G}\equiv 0$ and (\ref{GKW4Tb(N)}) follows immediately from combining these with assertion (a). This ends the proof of the theorem.
\end{proof}

\begin{corollary}  Let $M$ be a quasi-left-continuous element of ${\cal M}_{0,loc}(\mathbb F)$ such that $\Delta M\Delta m\equiv 0$. Then ${\cal T}_b(M)=\widehat M^{(b)}$.
\end{corollary}
The proof of the corollary is immediate due to the fact that in this case $\widetilde\varphi=Z_{-}$.


\subsection{Particular cases and examples for the model $(S^{\tau},\mathbb G)$}

We start this subsection by proving that the {\it structure conditions} hold whenever $S$ or $m$ is continuous.

\begin{theorem}\label{continuouscaseBefore} If $(S,\mathbb F)$ satisfies the SC, with its market's price $\widehat\lambda^{\mathbb F}$, and either $S$ or $m$ is continuous, then the model $(S^{\tau},\mathbb G)$ fulfills the SC, and its market's price of risk $\widehat\lambda^{\mathbb G}$ is given by 
\begin{eqnarray}\label{MarketPriceofRiskContinuousBefore}
{\widehat\lambda}^{\mathbb G} := \left( {\widehat\lambda}^{\mathbb F}+\frac{\beta^{(m)}}{Z_{-}} \right) I_{\Rbrack 0,\tau \Rbrack},\quad\mbox{with}\quad \beta^{(m)}:={{d\langle m,M^S\rangle^{\mathbb F}}\over{d\langle M^S\rangle^{\mathbb F}}}.
\end{eqnarray}
 \end{theorem}
\begin{proof} 
Suppose $(S,\mathbb F)$ satisfies SC with the market price $\widehat\lambda^{\mathbb F}$ and   $S = S_0 + M^S +{\widehat\lambda}^{\mathbb F} \is \langle M^S, M^S\rangle^{\mathbb F}$, where $M\in {\cal M}^2_{0,loc}(\mathbb{F})$ and ${\widehat\lambda}^{\mathbb F} \in L^2_{oc}(M,\mathbb F)$. For notational simplicity, we put $M:= M^S$ and  
 \begin{eqnarray}
 \widehat{M} : = I_{\Rbrack 0,\tau \Rbrack}\is M - \left({Z_{-}}\right)^{-1}I_{\Rbrack 0,\tau \Rbrack}\is \langle M,m\rangle^{\mathbb F},
 \end{eqnarray}
 which is a $\mathbb{G}$-locally square integrable local martingale. Then the Doob-Meyer decomposition of $S^\tau$ under $\mathbb{G}$ is given by 
 \begin{eqnarray}
 S^\tau = S_0 + \widehat{M}  + {\widehat\lambda}^{\mathbb F} I_{\Rbrack 0,\tau \Rbrack}\is \langle M,M\rangle^{\mathbb F} +\left({Z_{-}}\right)^{-1}I_{\Rbrack 0,\tau \Rbrack}\is \langle M,m\rangle^{\mathbb F}.
 \end{eqnarray}
Thus, the proof will follow as long as we find a $\mathbb{G}$-predictable process $\widehat{\lambda}^{\mathbb G} \in L^2_{loc}(\widehat{M},\mathbb G)$ such that
 \begin{eqnarray}
 {\widehat\lambda}^{\mathbb F} I_{\Rbrack 0,\tau \Rbrack}\is \langle M,M\rangle^{\mathbb F} + \left({Z_{-}}\right)^{-1}I_{\Rbrack 0,\tau \Rbrack}\is \langle M,m\rangle^{\mathbb F} = \widehat{\lambda}^{\mathbb G}\is \langle \widehat{M}\rangle^{\mathbb G}.
 \end{eqnarray}
 To this end, since $m$ is locally bounded,   the GKW-decomposition of $m$ with respect to $M$ (under $\mathbb F$)    implies the existence of an $\mbf$-predictable process ${{\beta^{(m)}}}\in L^2_{loc}(M,\mathbb F)$ and a locally square integrable $\mathbb{F}$-local martingale $m^\perp$ such that
 \begin{eqnarray}
 m = m_0 + {\beta^{(m)}}\is M + m^\perp\ \ \ \ \  \mbox{ and  }\ \ \ \  \ \langle M,  m^\perp \rangle^{\mathbb{F}} =0.
 \end{eqnarray}
 Therefore, 
 \begin{eqnarray}\label{drift4G}
 {\widehat\lambda}^{\mathbb F} I_{\Rbrack 0,\tau \Rbrack}\is \langle M,M\rangle^{\mathbb F} +\frac{1}{Z_{-}}I_{\Rbrack 0,\tau \Rbrack}\is \langle M,m\rangle^{\mathbb F} 
 &=& \left({\widehat\lambda}^{\mathbb F} +\frac{{\beta^{(m)}}}{Z_{-}} \right) I_{\Rbrack 0,\tau \Rbrack}\is \langle M,M\rangle^{\mathbb F}.
 \end{eqnarray}
The continuity of $S$ or $m$ leads to 
   $$\Delta \langle M, m\rangle^{\mathbb F}=0,\quad \widetilde{Z}\is [M]=Z_{-}\is [M],
   $$
and
   \begin{eqnarray*}\langle \widehat{M}\rangle^{\mathbb G}&=&\left([\widehat M]\right)^{p,\mathbb G}= \left([ M]^{\tau}\right)^{p,\mathbb G}={1\over{Z_{-}}} I_{\Rbrack 0,\tau \Rbrack} \is \left({\widetilde Z}\is [M]\right)^{p,\mathbb F}= I_{\Rbrack 0,\tau \Rbrack} \is\langle M\rangle^{\mathbb F}.
   \end{eqnarray*}
   
   By inserting above equalities  in (\ref{drift4G}), we get
 \begin{eqnarray}
 S^\tau = S_0 + \widehat{M} + {\widehat\lambda}^{\mathbb G}  \is \langle \widehat{M}\rangle^\mbg,\ \ \ \ \mbox{with the market price of risk  }\ \ {\widehat\lambda}^{\mathbb G} := \left( {\widehat\lambda}^{\mathbb F}+\frac{{\beta^{(m)}}}{Z_{-}} \right) I_{\Rbrack 0,\tau \Rbrack}.
 \end{eqnarray}
  It is clear that $\widehat{\lambda}^{\mathbb G} \in L^2_{loc}(\widehat{M}, \mathbb G)$ due to the local boundedness of $\left({Z_{-}}\right)^{-1}I_{\Rbrack 0,\tau\Rbrack}$. This ends the proof of theorem.\end{proof}

\begin{corollary}
Suppose that $S$ is a local martingale (i.e. $A^S\equiv 0$), and either $S$ is a continuous or $m$ is continuous, then $(S^{\tau},\mathbb G)$ satisfies SC, and its market's price of risk is $\beta^{(m)}Z_{-}^{-1}I_{\Rbrack 0,\tau \Rbrack}$.
  \end{corollary}
  
 The rest of this subsection is devoted to show, via simple examples, that the SC might be violated for some random times. 

\begin{proposition} Suppose that the stochastic basis $(\Omega, \mathbb{A},\mathbb{F}=({\cal F}_t)_{t\geq 0}, P)$ supports  a Poisson process $N$ with intensity $\lambda$, and the stock price --denoted by $X$-- is given by
$$
  dX_t = X_{t-} \psi dK_t, \ \ \mbox{where}, \ \ \psi >-1, \ \mbox{and} \ \psi \neq 0,\ \ \ \  K_t=N_t - \lambda t.$$
  Then the following assertions hold.\\  
  {\rm{(a)}} If
  \begin{eqnarray*}
  \tau = \alpha T_1+(1-\alpha)T_2,\ \ \ \ \ \mbox{where}\ \ \ \ T_i &=& \inf\{t\geq 0: N_t\geq i\}, \ i\geq 1,\ \ \ \alpha\in(0,1),
   \end{eqnarray*}
then $X^\tau$ does not satisfy SC$(\mathbb{G})$.\\
{\rm{(b)}} If $\tau =(\alpha T_2)\wedge T_1$, then $(Y^{\tau}, \mathbb G)$ satisfies SC, where $Y:=K$.
\end{proposition}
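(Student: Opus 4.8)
The plan is to compute the quantities attached to $\tau$ explicitly, read off the $\mathbb G$--canonical decomposition of $X^{\tau}$ from the classical reduction formula for progressive enlargement, and then derive a contradiction from the uniqueness of that decomposition.

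First I would collect the elementary facts. Almost surely $0<T_{1}<T_{2}<+\infty$, and since $k_{1},k_{2}>0$ with $k_{1}+k_{2}=1$ one has $T_{1}<\tau<T_{2}$. From $dX=X_{-}\psi\, dM$, the process $X$ is itself a (locally square integrable) $\mathbb F$--local martingale, namely $X=X_{0}+M^{X}$ with $M^{X}:=\psi X_{-}\centerdot M$ and $A^{X}\equiv 0$, so $X$ trivially satisfies SC$(\mathbb F)$. Using the memoryless property of the exponential inter--jump times I would then compute $Z_{t}=\widetilde Z_{t}=1$ on $\Lbrack 0,T_{1}\Rbrack$, $Z_{t}=\widetilde Z_{t}=\exp(-\lambda k_{1}(t-T_{1})/k_{2})$ on $\Rbrack T_{1},T_{2}\Lbrack$, and $Z_{T_{2}}=\widetilde Z_{T_{2}}=0$ with $Z\equiv\widetilde Z\equiv 0$ afterwards. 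In particular $\widetilde Z_{T_{2}}=0<\exp(-\lambda k_{1}(T_{2}-T_{1})/k_{2})=Z_{T_{2}-}$, so that $\Lbrack T_{2}\Rbrack\subset\{\Delta M^{X}\neq 0\}\cap\{\widetilde Z=0\}\cap\{Z_{-}>0\}$ and condition (\ref{CSbeforetaucondition}) indeed fails; moreover, since $Z,\widetilde Z$ are continuous on $\Lbrack 0,T_{2}\Lbrack$, the identity $\widetilde Z=Z_{-}+\Delta m$ shows that $m$ has a single jump, located at $T_{2}$, of size $\Delta m_{T_{2}}=-Z_{T_{2}-}$.

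Next I would compute the $\mathbb G$--drift acquired by $X^{\tau}$. As $M^{X}$ is purely discontinuous (no continuous martingale part) and jumps exactly at $T_{1},T_{2},\dots$, while $m$ jumps only at $T_{2}$, one gets $[M^{X},m]=\Delta M^{X}_{T_{2}}\,\Delta m_{T_{2}}\,I_{\Lbrack T_{2},+\infty\Lbrack}$; compensating this jump, whose magnitude is $\mathbb F$--predictable, at the $\mathbb F$--intensity $\lambda$ of $T_{2}$ on $\Rbrack T_{1},T_{2}\Lbrack$ gives $\langle M^{X},m\rangle^{\mathbb F}_{t}=-\lambda\psi\int_{T_{1}\wedge t}^{T_{2}\wedge t}X_{s}Z_{s}\, ds$. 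Since $\tau<T_{2}$ and $Z$ is continuous and bounded away from $0$ on the (compact) random interval $\Lbrack 0,\tau\Rbrack$, this yields
\[
\widehat A_{t}:=\frac{1}{Z_{-}}\,I_{\Lbrack 0,\tau\Rbrack}\centerdot\langle M^{X},m\rangle^{\mathbb F}_{t}=-\lambda\psi\int_{T_{1}\wedge t}^{\tau\wedge t}X_{s}\, ds,
\]
a continuous $\mathbb G$--predictable finite--variation process which, because $\lambda>0$, $\psi\neq 0$ and $X>0$, is strictly monotone on $\Rbrack T_{1},\tau\Lbrack$. By the reduction formula for progressive enlargement (Jeulin \cite{Jeu}; cf.\ the proof of Corollary \ref{coro:SCmcontinous}), $\widehat M:=(M^{X})^{\tau}-\widehat A$ is a $\mathbb G$--local martingale, so the $\mathbb G$--canonical decomposition of $X^{\tau}$ is $X^{\tau}=X_{0}+\widehat M+\widehat A$.

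Finally I would identify $\langle\widehat M\rangle^{\mathbb G}$ and conclude. Since $\widehat M$ differs from $(M^{X})^{\tau}$ only by the continuous finite--variation process $\widehat A$, we have $[\widehat M]=[(M^{X})^{\tau}]$; and because $M^{X}$ has no continuous part and stopping at $\tau\in\Rbrack T_{1},T_{2}\Lbrack$ keeps only the jump at $T_{1}$, this equals $\psi^{2}X_{T_{1}-}^{2}\,I_{\Lbrack T_{1},+\infty\Lbrack}$. Hence $[\widehat M]=([\widehat M])^{T_{1}}$, so $\langle\widehat M\rangle^{\mathbb G}=(\langle\widehat M\rangle^{\mathbb G})^{T_{1}}$ is constant on $\Rbrack T_{1},+\infty\Lbrack$. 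Now if $X^{\tau}$ satisfied SC$(\mathbb G)$, say $X^{\tau}=X_{0}+N-\widehat\lambda\centerdot\langle N\rangle^{\mathbb G}$ with $N\in\mathcal M^{2}_{0,loc}(\mathbb G)$ and $\widehat\lambda\in L^{2}_{loc}(N,\mathbb G)$, then by uniqueness of the $\mathbb G$--canonical decomposition of the special semimartingale $X^{\tau}$ we would get $N=\widehat M$ and $\widehat\lambda\centerdot\langle\widehat M\rangle^{\mathbb G}=-\widehat A$; but the left side is constant on $\Rbrack T_{1},+\infty\Lbrack$ whereas $-\widehat A$ is strictly monotone on $\Rbrack T_{1},\tau\Lbrack$ — impossible. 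Therefore $X^{\tau}$ does not satisfy SC$(\mathbb G)$. I expect the only delicate points to be the careful evaluation of $Z$ and $\widetilde Z$ (managing the conditioning through the memoryless property) and the observation that $\langle\widehat M\rangle^{\mathbb G}$ is carried by $\Lbrack 0,T_{1}\Rbrack$; the contradiction itself is precisely the failure mechanism — a genuine drift of $X^{\tau}$ on $\Rbrack T_{1},\tau\Lbrack$ with no martingale part of $X^{\tau}$ left to support it — that Theorem \ref{SCuptodefault} rules out when (\ref{CSbeforetaucondition}) holds.
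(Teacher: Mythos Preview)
Your argument is correct and follows essentially the same route as the paper: compute $Z$, $\widetilde Z$ and $m$ explicitly, read off the $\mathbb G$--drift $\widehat A=(Z_{-})^{-1}I_{\Lbrack 0,\tau\Rbrack}\centerdot\langle X,m\rangle^{\mathbb F}$ supported on $\Rbrack T_{1},\tau\Rbrack$, show that $\langle\widehat M\rangle^{\mathbb G}$ is carried by $\Lbrack 0,T_{1}\Rbrack$, and conclude by the disjointness of the two supports. The only cosmetic difference is that the paper obtains $\langle\widehat X\rangle^{\mathbb G}$ via the formula $I_{\Lbrack 0,\tau\Rbrack}\centerdot\langle X\rangle^{\mathbb F}+Z_{-}^{-1}I_{\Lbrack 0,\tau\Rbrack}\centerdot\bigl(\sum\Delta m(\Delta X)^{2}\bigr)^{p,\mathbb F}$ (i.e.\ via Lemma \ref{lemmecrucial}), whereas you shortcut this by observing that $\widehat A$ is continuous, hence $[\widehat M]=[(M^{X})^{\tau}]$ has its unique jump at $T_{1}$ --- a slightly more direct path to the same endpoint.
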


\begin{proof}a) Here we prove assertion (a). To this end, we recall from Aksamit et al. \cite{aksamit/choulli/deng/jeanblanc} that the Az\'ema supermartingale  $Z$ and $m$ take the forms of:
\begin{equation}
 Z = I_{\Lbrack 0, T_1 \Lbrack} + \phi^m I_{\Lbrack T_1, T_2 \Lbrack}, \ \ m= 1 - \phi^m I_{\Rbrack T_1, T_2 \Rbrack}\is K, \ \ \mbox{where} \  \phi^m_t = e^{-\lambda \frac{k_0}{k_2}(t - T_1)}.
\end{equation}
 Then, it is easy to calculate  that
 \begin{eqnarray*}
  \frac{1}{Z_{-}}I_{\Lbrack 0,\tau \Rbrack} \is \left\langle X, m\right\rangle^{\mathbb F}_t &=&   \frac{-1}{Z_{-}}I_{\Rbrack T_1,\tau \Rbrack}X_{-}\psi \phi^m \is \left \langle K\right\rangle^{\mathbb F}_t  \nonumber \\
  &=&  - \lambda\int_0^t    \frac{1}{Z_{u-}}I_{\Rbrack T_1,\tau \Rbrack}X_{u-}\psi_u \phi^m_u  du = - \lambda\int_0^t      X_{u-}\psi_u   I_{\Rbrack T_1, \tau \Rbrack} du
  \end{eqnarray*}
  and
  \begin{eqnarray*}
  \left \langle \widehat{X}^{(b)}, \widehat{X}^{(b)} \right\rangle^{\mathbb{G}}_t&=&   I_{\Lbrack 0,\tau \Rbrack}\is \left \langle X, X \right\rangle^{\mathbb F}_t + \frac{1}{Z_{-}}I_{\Lbrack 0,\tau \Rbrack}\is \left( \sum \Delta m (\Delta X)^2 \right)^{p,\mathbb{F}}_t \\
  &=& \lambda \int_0^t X_{u-}^2 \psi^2_u I_{\Lbrack 0,\tau \Rbrack} du  - \lambda\int_0^t \frac{1}{Z_{u-}}X_{u-}^2\psi_u^2 \phi^m_u  I_{\Rbrack T_1,\tau \Rbrack}du \\
  &=&  \lambda\int_0^t X_{u-}^2\psi_u^2  I_{\Lbrack 0, T_1 \Rbrack}du,
  \end{eqnarray*}
  where $\widehat{X}^{(b)}$ is defined via (\ref{honestforebforeprocess}).   Hence, there is no $\mathbb{G}$-predictable process $\widehat{\lambda} \in L^2_{loc}(\widehat{X}^{(b)})$ satisfying
  \begin{equation}
    \frac{1}{Z_{-}}I_{\Lbrack 0,\tau \Rbrack} \is \left\langle X, m\right\rangle^{\mathbb F} = \widehat{\lambda}\is   \left \langle \widehat{X}^{(b)}, \widehat{X}^{(b)} \right\rangle^{\mathbb{G}},
  \end{equation}
  since $\Rbrack T_1, \tau \Rbrack$ and $\Lbrack 0, T_1 \Rbrack$ are disjoint. This proves assertion (a).\\
  b) This part proves assertion (b). Thanks to the calculations in \cite[Example 2.12]{ACDM}, in this case of $\tau$, we have 
\begin{eqnarray*}
Y_{t\wedge\tau}=N_{t\wedge\tau}-\tau\wedge
t,\quad {\widehat Y}^{(b)}_t=Y_{t\wedge\tau}-\int_0^{\tau\wedge t} {{\beta u}\over{\beta
u+1}}du,\quad  \beta:=\alpha^{-1}-1>0,\end{eqnarray*}
and $Y^{\tau} {\cal E}(\theta\is {\widehat Y}^{(b)})$ are $\mathbb G$-local martingales, where ${\cal E}(\theta\is {\widehat Y}^{(b)})$ is the stochastic exponential of $\theta\is \widehat Y$ and $\theta_u:=(1+\beta u)/(1+2\beta u)-1=-(\beta u)/(1+2\beta u)$. Hence, since $\theta$ is locally bounded, we deduce that ${\cal E}(\theta\is{\widehat Y}^{(b)})\in{\cal M}_{loc}^2(\mathbb G)$. Thus, a direct application of Theorem \ref{NUPNR/SC} leads to conclude that $Y^{\tau}$ fulfills SC. This ends the proof of the proposition.
 \end{proof}

\subsection{The general case for  $(S^{\tau},\mathbb G)$}

The following is  the first main result of this section, where we treat fully  the problem {\bf(Prob1)} for the case when $S$ is quasi-left-continuous.
\begin{theorem}\label{generlaQLC}
 Suppose $S$ is quasi-left-continuous. Then the following are equivalent.\\
{\rm{ (a)}} $(S^{\tau}, \mathbb G)$ satisfies SC.\\
 {\rm{ (b)}} For any $\delta>0$, the model $(I_{\{Z_{-}\geq \delta\}}\is S^{(0)},\mathbb F)$ satisfies SC, and its market's price of risk $\lambda^{(0,\mathbb F)}$ satisfies $(\lambda^{(0,\mathbb F)}Z_{-}+{\beta^{(0,m)}})( \widetilde\varphi)^{-1/2}I_{\{Z_{-}\geq\delta,\  \widetilde\varphi>0\}}\in L^2_{loc}({\cal T}_0(M),\mathbb F)$, where
 \begin{equation}\label{process Szero}
 S^{(0)}:= S-\sum I_{\{ \widetilde Z=0<Z_{-}\}}\Delta M^S=:S_0+{\cal T}_0(M)+A^{(0)},\quad 
 A^{(0)}:=A-\left(\sum I_{\{ \widetilde Z=0<Z_{-}\}}\Delta M^S\right)^{p,\mathbb F}.
 \end{equation}
 
 Furthermore, the market's prices of risk $\widetilde{\lambda}^{\mathbb G}$ and $\widetilde{\lambda}^{(0,\mathbb F)}$, for $(S^{\tau}, \mathbb G)$  and $(I_{\{Z_{-}\geq \delta\}}\is S^{(0)},\mathbb F)$ respectively, are related by the following
 \begin{eqnarray}
 &&\widetilde{\lambda}^{\mathbb G}= {{Z_{-}\widetilde{\lambda}^{(0,\mathbb F)}+{\beta^{(0,m)}} }\over{ \widetilde{\varphi}}}I_{\{ \widetilde\varphi>0\}} I_{\Rbrack0,\tau\Rbrack}\quad\mbox{and}\quad Z_{-}^2\widetilde{\lambda}^{(0,\mathbb F)}=\ ^{p,\mathbb F}(\widetilde{\lambda}^{\mathbb G}I_{\Rbrack0,\tau\Rbrack}) \widetilde{\varphi}- {\beta^{(0,m)}} Z_{-}, \label{G2F}\\
 &&\mbox{where}
 \quad {\beta^{(0,m)}}:={{d\langle m, {\cal T}_0(M) \rangle^{\mathbb F}}\over{d\langle {\cal T}_0(M)\rangle^{\mathbb F}}},\quad\mbox{and}\quad  \widetilde\varphi:={{d\left(\widetilde Z\is [M,M]\right)^{p,\mathbb F}}\over{d \langle {\cal T}_0(M)\rangle^{\mathbb F}}}.\label{F2G}
  \end{eqnarray}
 \end{theorem}

\begin{proof}
The proof of this theorem is achieved in two steps. The first step proves (b)$\Longrightarrow$ (a), while the second step proves the converse. \\
{\bf Step 1.} Here we prove (b)$\Longrightarrow$ (a). To this end, we assume that assertion (b) holds. Then the quasi-left-continuity of $S$ implies the quasi-left-continuity of $M:=M^S$. We start by recalling some useful equalities from Proposition \ref{prooTbXcompensators} and (\ref{TbmzeroandTbm}) in Proposition \ref{propositionmzerotaob} as follows.
\begin{eqnarray*}\label{mM/MMquasileft}
\langle{\cal T}_b({M}),\widehat{M}^{(b)} \rangle^{\mathbb G}&=&I_{\Rbrack0,\tau\Rbrack}\is\langle {\cal T}_0(M)\rangle^{\mathbb F},\ \mbox{and}  \
 \langle{\cal T}_b(m),\widehat{M}^{(b)} \rangle^{\mathbb G} = I_{\Rbrack0,\tau\Rbrack}\is  \langle m,{\cal T}_0(M)\rangle^{\mathbb F} ,\\
 {\cal T}_b({\cal T}_0(M))&=& {\cal T}_b(M),\quad\mbox{and}\quad  \widehat {{\cal T}_0(M)}^{(b)}  = {\widehat M}^{(b)} .
\end{eqnarray*}
Let $\tau_{\delta}$ be $\mathbb G$-stopping time such that $\Rbrack0,\tau\wedge\tau_{\delta}\Rbrack\subset \{Z_{-}\geq \delta\}$, which is possible since $Z_{-}^{-1}I_{\Rbrack0,\tau}$ is $\mathbb G$-locally bounded. 
Since  $S^{\tau}=(S^{(0)})^{\tau}$ and $S^{(0)}$ satisfies SC,  we deduce the existence  of an $\mathbb F$-predictable process $\widetilde{\lambda}^{(0,\mathbb F)}$ such that $A^{(0)}=\widetilde{\lambda}^{(0,\mathbb F)}\is \langle {\cal T}_0(M)\rangle^{\mathbb F}$. Hence, by combining all these with Theorem \ref{GKW4G}, we derive
\begin{eqnarray*}
S^{\tau}&&=S_0+({\cal T}_0(M))^{\tau}+(A^{(0)})^{\tau}=S_0+{\widehat M}^{(b)}+{1\over{Z_{-}}}I_{\Rbrack0,\tau\Rbrack}\is\langle m, {\cal T}_0(M)\rangle^{\mathbb F}+\widetilde{\lambda}^{(0,\mathbb F)}I_{\Rbrack0,\tau\Rbrack}\is \langle {\cal T}_0(M) \rangle^{\mathbb F}\\
&&=S_0+{\widehat M}^{(b)}+{1\over{Z_{-}}}I_{\Rbrack0,\tau\Rbrack}\is\langle{\cal T}_b(m), {\widehat M}^{(b)}\rangle^{\mathbb G}+\widetilde{\lambda}^{(0,\mathbb F)}I_{\Rbrack0,\tau\Rbrack}\is \langle {\cal T}_b(M), {\widehat M}^{(b)} \rangle^{\mathbb G}\\
&&=S_0+{\widehat M}^{(b)}+{{{\beta^{(0,m)}}+Z_{-}\widetilde{\lambda}^{(0,\mathbb F)}}\over{{\widetilde\varphi} }}I_{\Rbrack0,\tau\Rbrack}\is \langle {\widehat M}^{(b)}\rangle^{\mathbb G}.
\end{eqnarray*}
Thus,   assertion (e) of Proposition \ref{propositionmzerotaob} implies  $I_{\{Z_{-}\geq\delta,\  \widetilde\varphi>0\}}(\lambda^{(0,\mathbb F)}Z_{-}+{\beta^{(0,m)}})/\sqrt{ \widetilde\varphi}$ belongs  to $L^2_{loc}({\cal T}_0(M),\mathbb F)$ for any $\delta>0$ iff  $I_{\Rbrack0,\tau\Rbrack}I_{\{ \widetilde\varphi>0\}}(\lambda^{(0,\mathbb F)}Z_{-}+{\beta^{(0,m)}})/  \widetilde\varphi \in L^2_{loc}(\widehat M^{(b)},\mathbb G)$ and assertion (b) follows immediately. This ends the first step.\\
{\bf Step 2.}  Herein, we prove (a) $\Longrightarrow$ (b). Suppose that  $(S^{\tau}, \mathbb G)$ satisfies SC, and denote its market's price by $\widetilde{\lambda}^{(\mathbb G)}$. Then due to $S^{\tau}=(S^{(0)})^{\tau}=S_0+ {\widehat M}^{(b)}+ (A^{(0)})^{\tau} +Z_{-}^{-1}I_{\Rbrack0,\tau\Rbrack}\is \langle m,{\cal T}_0(M)\rangle^{\mathbb F}$, and   the fact that there exists an $\mathbb F$-predictable process $\lambda$ that coincides with $ \widetilde{\lambda}^{(\mathbb G)}$ on $\Rbrack0,\tau\Rbrack$,  we obtain 
\begin{eqnarray*}
(A^{(0)})^{\tau}+Z_{-}^{-1}I_{\Rbrack0,\tau\Rbrack}\is \langle m,{\cal T}_0(M)\rangle^{\mathbb F}=\widetilde{\lambda}^{(\mathbb G)}\is \langle {\widehat M}^{(b)}\rangle^{\mathbb G}=\lambda\is \langle {\widehat M}^{(b)}\rangle^{\mathbb G}=\lambda  {\widetilde\varphi}Z_{-}^{-1}I_{\Rbrack0,\tau\Rbrack}\is \langle {\cal T}_0(M)\rangle^{\mathbb F}.
\end{eqnarray*}
Then by compensating under $\mathbb F$ on  both sides of the above equality, we obtain
\begin{eqnarray*}
Z_{-}\is A^{(0)}= - \langle m,{\cal T}_0(M)\rangle^{\mathbb F}+ \lambda  {\widetilde\varphi}\is \langle {\cal T}_0(M)\rangle^{\mathbb F}=\left(-\beta^{(0,m)}+ \lambda  {\widetilde\varphi}\right)\is \langle {\cal T}_0(M)\rangle^{\mathbb F}.
\end{eqnarray*}
and hence we conclude that
\begin{eqnarray*}
I_{\{Z_{-}\geq \delta\}}\is A^{(0)}={{-\beta^{(0,m)}+ \lambda  {\widetilde\varphi}}\over{Z_{-}}}I_{\{Z_{-}\geq \delta\}}\is \langle  {\cal T}_0(M)\rangle^{\mathbb F},
\end{eqnarray*}
and $  (-\beta^{(0,m)}+ \lambda  {\widetilde\varphi} I_{\{Z_{-}\geq \delta\}})/Z_{-} \in L^2_{loc}( {\cal T}_0(M),\mathbb F)$. This proves that $(I_{\{Z_{-}\geq \delta\}}\is S^{(0)}, \mathbb F)$ satisfies SC for any $\delta
\in (0,1)$, as well as the second equality in (\ref{G2F}) since $\lambda=\ ^{p,\mathbb F}(\widetilde{\lambda}^{\mathbb G} Z_{-}^{-1}I_{\Rbrack0,\tau\Rbrack}) $. As a consequence, the proof of (a) $\Longrightarrow $ (b) is completed. This ends the proof of the theorem.
\end{proof} 

The next theorem drops the quasi-left-continuity assumption on $S$, and gives practical sufficient conditions on $S$ and $\tau$ such that the resulting model $(S^{\tau},\mathbb G)$ fulfills the SC. 

\begin{theorem}\label{SCuptodefault} Suppose $(S,\mathbb F)$ satisfies SC  with market's price of risk denoted by  $\widetilde{\lambda}^{\mathbb F}$, and satisfies 
 \begin{equation}\label{CSbeforetaucondition}
\{\Delta M^S\neq 0\} \cap\{\widetilde{Z} =0< Z_{-} \}=\emptyset\quad\mbox{and}\quad {{(\widetilde{\lambda}^{\mathbb F}Z_{-}+{\beta^{(0,m)}})}\over{\sqrt{\widetilde\varphi}}}I_{\{\widetilde\varphi>0\}} \in L^2_{loc}(M,\mathbb F).
\end{equation}
Then $(S^{\tau},\mathbb{G})$ satisfies SC, and its market's prices of risk denoted by $\widetilde{\lambda}^{\mathbb G}$, is given by
\begin{eqnarray}\label{MarketPriceG}
\widetilde{\lambda}^{\mathbb G}:={{ Z_{-} \ ^{p,\mathbb F}(I_{\{\widetilde Z>0\}}) \widetilde{\lambda}^{\mathbb F}+{\beta^{(0,m)}}}\over{\ ^{p,\mathbb F}(I_{\{\widetilde Z>0\}}) Z_{-}\widetilde\varphi}}I_{\{ \widetilde\varphi>0\}}I_{\Rbrack0,\tau\Rbrack} .\end{eqnarray}
\end{theorem}

\begin{proof}
Let $N$ be an $\mathbb F$-local martingale.  Then we calculate
\begin{eqnarray*}
&&[{\cal T}_b(N),\widehat{M}^{(b)}]= [{\cal T}_b(N),M^{\tau}]+{\mathbb G}\mbox{-local martingale}\\
&&={{Z_{-}}\over{\widetilde Z}}I_{\Rbrack0,\tau\Rbrack}\is [N,M]+\ ^{p,\mathbb F}(\Delta N I_{\{\widetilde Z=0<Z_{-}\}})\is M^{\tau} +{\mathbb G}\mbox{-local martingale}\\
&&={{Z_{-}}\over{\widetilde Z}}I_{\Rbrack0,\tau\Rbrack}\is [N,M]+{{^{p,\mathbb F}(\Delta N I_{\{\widetilde Z=0<Z_{-}\}})}\over{Z_{-}}}I_{\Rbrack0,\tau\Rbrack}\is\langle m, M\rangle^{\mathbb F} +{\mathbb G}\mbox{-local martingale}.
\end{eqnarray*}
Then, we derive 

\begin{eqnarray}\label{NMunderG}
&&\langle{\cal T}_b(N),\widehat{M}^{(b)}\rangle^{\mathbb G}=I_{\Rbrack0,\tau\Rbrack}\is\left(I_{\{\widetilde Z>0\}}\is  [N,M]+\right)^{p,\mathbb F}   +{{^{p,\mathbb F}(\Delta N I_{\{\widetilde Z=0<Z_{-}\}})}\over{Z_{-}}}I_{\Rbrack0,\tau\Rbrack}\is\langle m, M\rangle^{\mathbb F} .
\end{eqnarray}
By applying this to the cases $N=M$ and $N=m$ respectively, we obtain 

\begin{eqnarray}\label{MMunderG}
&&\langle{\cal T}_b(M),\widehat{M}^{(b)}\rangle^{\mathbb G}=I_{\Rbrack0,\tau\Rbrack}\is\left(I_{\{\widetilde Z>0\}}\is  [M, M]\right)^{p,\mathbb F}   +{{^{p,\mathbb F}(\Delta M I_{\{\widetilde Z=0<Z_{-}\}})}\over{Z_{-}}}I_{\Rbrack0,\tau\Rbrack}\is\langle m, M\rangle^{\mathbb F} .
\end{eqnarray}

\begin{eqnarray}\label{mMunderG}
\langle{\cal T}_b(m),\widehat{M}^{(b)}\rangle^{\mathbb G}&&=I_{\Rbrack0,\tau\Rbrack}\ ^{p,\mathbb F} (I_{\{\widetilde Z>0\}})\is  \langle m,M\rangle^{\mathbb F}  +Z_{-}I_{\Rbrack0,\tau\Rbrack}\is\left( \sum \Delta M I_{\{\widetilde Z=0<Z_{-}\}}\right)^{p,\mathbb F} .
\end{eqnarray}

Thanks to $\{\Delta M\not=0\}\cap\{\widetilde Z=0<Z_{-}\}=\emptyset$, we get 
 
 \begin{eqnarray}\label{Mm/MMunderG}
\langle{\cal T}_b(m),\widehat{M}^{(b)}\rangle^{\mathbb G}=I_{\Rbrack0,\tau\Rbrack}\ ^{p,\mathbb F} (I_{\{\widetilde Z>0\}})\is  \langle m,M\rangle^{\mathbb F},\quad\quad \quad  \langle{\cal T}_b(M),\widehat{M}^{(b)}\rangle^{\mathbb G}=I_{\Rbrack0,\tau\Rbrack}\is\langle M\rangle^{\mathbb F} .\end{eqnarray}

Thus, under the assumption  that $S$ satisfies the SC$(\mathbb F)$ with its market's price denoted by $\widetilde{\lambda}^{\mathbb F}$, we get 
\begin{eqnarray*}
&&S^{\tau}=S_0+M^{\tau}+A^{\tau}=S_0+M^{\tau}+\widetilde{\lambda}^{\mathbb F}I_{\Rbrack0,\tau\Rbrack}\is\langle M\rangle^{\mathbb F}\\
&&=S_0+{\widehat M}^{(b)}+Z_{-}^{-1}I_{\Rbrack0,\tau\Rbrack}\is  \langle m,M\rangle^{\mathbb F}+\widetilde{\lambda}^{\mathbb F}I_{\Rbrack0,\tau\Rbrack}\is\langle M\rangle^{\mathbb F}\\
&&= S_0+{\widehat M}^{(b)}+Z_{-}^{-1} {^{p,\mathbb F} (I_{\{\widetilde Z>0\}})}^{-1}I_{\Rbrack0,\tau\Rbrack}\is  \langle{\cal T}_b(m),{\widehat M}^{(b)}\rangle^{\mathbb G}+ \widetilde{\lambda}^{\mathbb F}I_{\Rbrack0,\tau\Rbrack}\is\langle{\cal T}_b(M), {\widehat M}^{(b)}\rangle^{\mathbb G} .\\
&&=S_0+{\widehat M}^{(b)}+{{{\beta^{(0,m)}}Z_{-}+ \ ^{p,\mathbb F} (I_{\{\widetilde Z>0\}})\widetilde{\lambda}^{\mathbb F}}\over{\ ^{p,\mathbb F} (I_{\{\widetilde Z>0\}})Z_{-}{\widetilde\varphi}}}I_{\Rbrack0,\tau\Rbrack}\is\langle{\widehat M}^{(b)}\rangle^{\mathbb G}.
\end{eqnarray*}
The last equality is due to Theorem \ref{GKW4G}. Thus, we conclude that $(S^{\tau},\mathbb G)$ satisfies SC and its market's price  is given by (\ref{MarketPriceG})  which belongs to $L^2_{loc}(\widehat M, \mathbb G)$,  as this is equivalent to $I_{\{Z_{-}\geq\delta\}}(\lambda^{\mathbb F}Z_{-}+{\beta^{(0,m)}})/\sqrt{ \widetilde\varphi}$ belonging  to $L^2_{loc}({\cal T}_0(M),\mathbb F)$ for any $\delta>0$. This ends the proof of the theorem.
\end{proof}
\begin{remark}
 {\rm{(a)}} If either $S$ or $m$ is quasi-left-continuous (i.e. it does not jump at $\mathbb F$-predictable stopping times), then condition (\ref{CSbeforetaucondition}) is equivalent to $$\{\Delta S\neq 0\} \cap\{\widetilde{Z} =0< Z_{-} \}=\emptyset.$$
 Herein,  we put $A=A^S$ and $M=M^S$ for the sake of simplicity. Indeed, since $S$ is a special semimartingale with Doob-Meyer decomposition $S=S_0+M+A$, we get $\Delta A=\ ^{p,\mathbb F}(\Delta S)\equiv 0$, and the claim is proved. This equivalent condition appeared already in the study of the No-Unbounded-Profit-with-Bounded-Risk (NUPBR hereafter) concept under stopping with random time in \cite{ACDM}. For the financial and mathematical interpretation of this condition, we refer the reader to this paper. Thus, this connection --between the SC and the NUPBR via the above condition---  enhances our focus on the quasi-left-continuous case.\\
  {\rm{(b)}} The following process
 \begin{equation}\label{processV}
 V_t:=\sum_{0<u\leq t} I_{\{ \widetilde Z_u=0<Z_{u-}\}}\Delta M_u,\end{equation}
 is well defined, and it is a c\`adl\`ag process with finite variation. In fact it is enough to remark that there exists an $\mathbb F$-stopping time, $\widehat R$, such that $\{ \widetilde Z_u=0<Z_{u-}\}\subset \Lbrack \widehat R\Rbrack$, and $Var(V)_t\leq \vert \Delta M_{\widehat R}\vert I_{\Lbrack \widehat R,+\infty\Lbrack}.$\\
  {\rm{(c)}}  It is important to mention that, in Theorem \ref{generlaQLC}, we do not assume the SC property for $S$ nor for $S^{(0)}$, while the theorem gives a complete and precise characterization, in terms of $\mathbb F$-processes only, for the SC property to be valid for the model $\left(S^{\tau},\mathbb  G\right)$.\\
  {\rm{(b)}} The set $\{Z_{-}\geq \delta\}$ when $\delta$ varies in $(0,1)$ are intimately related to the transfer of the localization property from $\mathbb G$ to the smallest filtration $\mathbb F$. This stability property of the localization from the bigger filtration to the smallest and vice versa was established in \cite{ACDM}. 
\end{remark}
Our third main theorem of this section answers completely the problem {\bf(Prob2)}, and describes the random time models for which the SC property is preserved after stopping with $\tau$ for any model.

\begin{theorem}\label{theo:SCNULL}  For any $N\in {\cal M}^2(\mathbb F)$, we associate  the following $\mathbb F$-predictable process $\varphi_N$  given by 
\begin{eqnarray}\label{varphiN}
\varphi_N:={{d(\widetilde Z\is [N])^{p,\mathbb F}}\over{d\langle N\rangle^{\mathbb F}}}.
\end{eqnarray}
 The following assertions are equivalent.\\
  {\rm{(a)}} $\{\widetilde{Z} =0<Z_{-}\}=\emptyset$, and $\displaystyle{1\over{\varphi_N}} I_{\{Z_{-}\geq\delta,\ \varphi_N>0\}} $ is locally bounded for any $N\in {\cal M}^2(\mathbb F)$ and any $\delta>0$. \\
  {\rm{(b)}}  $\{\widetilde{Z} =0<Z_{-}\}=\emptyset$ and $\displaystyle{1\over{\varphi_N}} I_{\Rbrack 0,\tau\Rbrack}I_{\{\varphi_N>0\}}$ is $\mathbb G$-locally bounded for any $N\in {\cal M}^2_{loc}(\mathbb F)$.\\
  {\rm{(c)}} For any model $(X,\mathbb F)$ satisfying SC, the resulting model $(X^{\tau},\mathbb G)$ does also satisfy SC.
\end{theorem}
\begin{proof} This proof will be achieved in three steps. The first step proves (a)$\Longleftrightarrow$ (b). The second step deals with  (a)$\Longrightarrow$ (c), while the third step addresses  (c)$\Longrightarrow$ (a).\\
\noindent {\bf Step 1.} Let $N\in {\cal M}^2_{0,loc}(\mathbb F)$. Since $Z_{-}^{-1}I_{\Rbrack 0,\tau\Rbrack}$ is $\mathbb G$-locally bounded, we deduce the existence of a family of $\mathbb G$-stopping times $\tau_{\delta}$ that increases to infinity when $\delta$ goes to zero and 
$$
\Rbrack 0,\tau\wedge\tau_{\delta}\Rbrack\subset\{Z_{-}\geq\delta\}.$$
This implies that, on the one hand, $\displaystyle{1\over{\varphi_N}} I_{\Rbrack 0,\tau\Rbrack}I_{\{\varphi_N>0\}}$ is $\mathbb G$-locally bounded if and only if $\displaystyle{1\over{\varphi_N}} I_{\Rbrack 0,\tau_{\delta}\wedge\tau\Rbrack}I_{\{\varphi_N>0\}}$  (or equivalently $\displaystyle{1\over{\varphi_N}} I_{\Rbrack 0,\tau\Rbrack} I_{\{Z_{-}\geq\delta,\ \varphi_N>0\}}$)  is $\mathbb G$-locally bounded for any $\delta>0$. On the other hand, thanks to Proposition \ref{G/Flocalization} -(c) (see also  \cite[Proposition B.2-(c)]{ACDM}), we conclude that  $\displaystyle{1\over{\varphi_N}} I_{\Rbrack 0,\tau\Rbrack} I_{\{Z_{-}\geq\delta,\ \varphi_N>0\}}$ is $\mathbb G$-locally bounded if and only if  $\displaystyle{1\over{\varphi_N}} I_{\{Z_{-}\geq\delta,\ \varphi_N>0\}}$ is $\mathbb F$-locally bounded. This ends the first step.\\
\noindent {\bf Step 2.}  Suppose that assertion (a) holds, and let $(X,\mathbb F)$ be a model satisfying SC, where $X:=N+B$ with $N\in {\cal M}_{0,loc}^2(\mathbb F)$  and $B$ an $\mathbb F$-predictable process with finite variation. Remark that, thanks to assertion (a), the assumptions (\ref{CSbeforetaucondition}) corresponding to the model $(X,\mathbb F)$ are fulfilled. Therefore, a direct application of Theorem \ref{SCuptodefault}  implies that $(X^{\tau},\mathbb G)$ satisfies SC. This ends the second step.\\
\noindent {\bf Step 3.} Assume that assertion (c) holds, and remark that  $\{\widetilde{Z} =0<Z_{-}\} \subset \{\Delta m \neq 0\}$ which is a thin set (i.e. it is at most a union of countable graphs of $\mathbb F$-stopping times). Consider an $\mathbb F$-stopping time $T$ such that $\Lbrack  T \Rbrack \subset \{\widetilde{Z} =0<Z_{-}\} $.   Then $X^{\tau}$ satisfies SC$(\mathbb  G)$, where
\begin{eqnarray} X=I_{\Lbrack T,+\infty\Lbrack}-\left(I_{\Lbrack T,+\infty\Lbrack}\right)^{p,\mathbb F}\in {\cal M}_{0}(\mathbb F).
 \end{eqnarray}
Since $\tau<T, \ P-a.s.$ on $\{T<+\infty\}$ (due to $\widetilde Z_T=0$\ on $\{T<+\infty\}$), we deduce that
\begin{equation}\label{predictableMtau}
X^{\tau}=-I_{\Rbrack0,\tau\Rbrack} \is\left(I_{\Lbrack T,+\infty\Lbrack}\right)^{p,\mathbb F}\ \mbox{is}\ \mathbb G\mbox{-predictable and satisfies SC}(\mathbb G).\end{equation}
Hence, thanks to Lemma \ref{lem:predictableSCNUll}, we conclude that $M^\tau$ is a null process, or equivalently
\begin{eqnarray}
0=E\left(X_{\tau}\right)=E\left(\int_0^{+\infty} Z_{s-} d\left(I_{\Lbrack T,+\infty\Lbrack}\right)^{p,\mathbb F}_s\right)=E\left(Z_{T-} I_{\{ T<+\infty\}}\right).
  \end{eqnarray}
Since $Z_{T-}>0$ on  $\{T<+\infty\}$, we conclude that $T=+\infty, \ P-a.s.$, and the thin set $\{\widetilde{Z} =0<Z_{-}\}$ is evanescent (see Proposition 2.18 on Page 20 in \cite{JSlimitbook}). Thus, as a result, for any $N\in {\cal M}_{0,loc}^2(\mathbb F)$, we deduce that  
\begin{eqnarray}\label{BracketNGF}
\langle {\widehat N}^{(b)}\rangle^{\mathbb F}={{I_{\Rbrack 0,\tau\Rbrack}}\over{Z_{-}}}\is \left({\widetilde Z}\is [N,N]\right)^{p,\mathbb F}={{\varphi_N I_{\Rbrack 0,\tau\Rbrack}}\over{Z_{-}}}\is \langle N\rangle^{\mathbb F}.
\end{eqnarray}
Now we focus on proving the second claim of assertion (a). To this end, we consider $N\in {\cal M}^2(\mathbb F)$ and $\delta>0$. Consider the following process
\begin{eqnarray*}
X_{\delta}:=I_{\{Z_{-}\geq\delta\}}\is N+{{\theta-\beta^{(0,m)}}\over{Z_{-}}}I_{\{Z_{-}\geq\delta\}}\is \langle N\rangle^{\mathbb F},\quad\mbox{where}\quad  \beta^{(0,m)}:={{d\langle m, N\rangle^{\mathbb F}}\over{ d\langle N\rangle^{\mathbb F}}},\quad \theta\in L^2_{loc}(N,\mathbb F).
\end{eqnarray*}
then it is clear that $(X_{\delta},\mathbb F)$ satisfies SC, and hence $((X_{\delta})^{\tau},\mathbb G)$ satisfies SC also. As a result, by putting $N_{\delta}:=I_{\{Z_{-}\geq\delta\}}\is N$, we obtain
\begin{eqnarray*}
X_{\delta}^{\tau}&&=I_{\{Z_{-}\geq\delta\}}\is {\widehat N}^{(b)}+Z_{-}^{-1}I_{\Rbrack 0,\tau\Rbrack}\is\langle m, N_{\delta}\rangle^{\mathbb F}+ {{\theta-\beta^{(0,m)}}\over{Z_{-}}}I_{\Rbrack 0,\tau\Rbrack}I_{\{Z_{-}\geq\delta\}}\is \langle N\rangle^{\mathbb F}\\
&&=I_{\{Z_{-}\geq\delta\}}\is{\widehat N}^{(b)}+ {{\theta}\over{Z_{-}}}I_{\Rbrack 0,\tau\Rbrack}I_{\{Z_{-}\geq\delta\}}\is \langle N\rangle^{\mathbb F}\\
&&=I_{\{Z_{-}\geq\delta\}}\is{\widehat N}^{(b)}+ {{\theta}\over{\varphi_N Z_{-}}}I_{\Rbrack 0,\tau\Rbrack}I_{\{Z_{-}\geq\delta\}}\is \langle{\widehat N}^{(b)}\rangle^{\mathbb G}.
\end{eqnarray*}
The last equality follows from (\ref{BracketNGF}). Hence, we conclude that ${{\theta}\over{\varphi_N Z_{-}}}I_{\Rbrack 0,\tau\Rbrack}I_{\{Z_{-}\geq\delta\}}\in L^2_{loc}({\widehat N}^{(b)}, \mathbb G)$ for any $\theta\in L^2_{loc}(N, \mathbb F)$. Thus, by combining (\ref{BracketNGF}), Proposition \ref{G/Flocalization} -(c) and \cite[Chapter VIII 10-11]{dm2} (Lenglart's result that claims that every predictable process $H$, such that $\sup_{0\leq s\leq \cdot}\vert H_s\vert$ has a finite variation, is locally bounded), the claim ${{\theta}\over{\varphi_N Z_{-}}}I_{\Rbrack 0,\tau\Rbrack}I_{\{Z_{-}\geq\delta\}}\in L^2_{loc}({\widehat N}^{(b)}, \mathbb G)$ for any $\theta\in L^2_{loc}(N, \mathbb F)$ is equivalent to
\begin{eqnarray*}
\left({{\vert\lambda\vert}\over{\varphi_N}}I_{\{Z_{-}\geq\delta\}}\is \langle N\rangle^{\mathbb F}\right)_T<+\infty,\quad P\mbox{-a.s}.,\end{eqnarray*}
for any $\mathbb F$-predictable process $\lambda $ such that $\left(\vert\lambda\vert\is \langle N\rangle^{\mathbb F}\right)_T<+\infty,\quad P\mbox{-a.s}.$. A combination of this with \cite[Theorem 2.7]{Rudin}, we deduce that $P$-almost all $\omega\in\Omega$, the function ${1\over{\varphi_N(\omega, s)}}I_{\{Z_{s-}(\omega)\geq\delta\}},\ s\in [0,T]$ belongs to the dual of $L^1([0,T], d\langle N\rangle^{\mathbb F}_s(\omega))$ (i.e. $L^{\infty}([0,T], d\langle N\rangle^{\mathbb F}_s(\omega))$). In fact, it is enough to apply \cite[Theorem 2.7]{Rudin} to $$\Lambda_n(\lambda):=\int_0^T {{\lambda(s)}\over{\varphi_N(\omega, s)}}I_{\{Z_{s-}(\omega)\geq\delta,\ \varphi_N(\omega, s)\geq n^{-1}\}}d\langle N\rangle^{\mathbb F}_s(\omega),\quad n\geq 1,$$ which converges to $\int_0^T {{\lambda(s)}\over{\varphi_N(\omega, s)}}I_{\{Z_{s-}(\omega)\geq\delta,\ \varphi_N(\omega, s)>0\}}d\langle N\rangle^{\mathbb F}_s(\omega) $ for any $\lambda\in L^1([0,T], d\langle N\rangle^{\mathbb F}_s(\omega))$.   Therefore, $P$-almost all $\omega\in\Omega$, there exists $C(\omega)\in (0,+\infty)$ such that 
 \begin{eqnarray*}
{1\over{\varphi_N(\omega, s)}}I_{\{Z_{s-}(\omega)\geq\delta\}}\leq C(\omega),\quad \forall\ s\in ]0,T].
\end{eqnarray*}
This proves that $\displaystyle \sup_{0\leq s\leq \cdot} (\varphi_N(s))^{-1}I_{\{Z_{s-}\geq\delta,\ \varphi_N(s)>0\}}$ $\mathbb F$-predictable with finite variation. Hence, it is locally bounded due to Lenglart's result in \cite[Chapter VIII 10-11]{dm2}.This proves the theorem.
  \end{proof}


\section{Structure conditions under a class of honest times}\label{StructureConditionsAftertau}

In this section, we focus on answering the two problems {\bf(Prob1)} and {\bf(Prob2)} when we totally incorporate a random time. This can be achieved by splitting the whole half line into two stochastic intervals $\Rbrack 0, \tau\Rbrack$ and $\Rbrack \tau, +\infty\Lbrack$. The first part, i.e.  $S^{\tau}$ is already studied in the previous section. Thus, this section will concentrate on studying the Structure Conditions of $S$ on the stochastic interval  $\Rbrack \tau, +\infty\Lbrack$. Throughout this section, the random time $\tau$ will be assumed to be a honest time.  Below, we recall the definition of honest time and for more details we refer to Jeulin \cite[Chapter 4]{Jeu}.
\begin{definition}  A random time $\tau$ is called an $\mathbb F$-honest time, if for any $t$, there exists an ${\cal {F}}_t$-measurable random variable $\tau _t$ such
that $\tau I_{\{\tau<t\}}=\tau_t I_{\{\tau<t\}}$.
\end{definition}
More precisely, throughout the rest of the paper, $\tau$ is supposed to satisfy
\begin{equation}\label{conditiononTau}
\tau\ \mbox{is a honest time,}\quad  Z_{\tau}I_{\{\tau<+\infty\}}<1,\quad \mbox{and}\quad \tau<+\infty\quad P\mbox{-a.s.}.
 \end{equation}
 \begin{remark}
This assumption is also crucial for the validity of No-Unbounded-Profit-with-Bounded-Risk after an honest time. We refer to Choulli et al \cite{Choulli2013} for more details on this subject.
\end{remark}
 
 \subsection{New $\mathbb G$-local martingales for the part-after-$\tau$ and their properties}
 This subsection extends Subsections \ref{Sub3.1}-\ref{Sub3.2} to the part-after-$\tau$. Thus, we start by introducing the corresponding three transformation operators for  the part-after-$\tau$, their  properties and their interplay. This is the aim of the next two propositions.
\begin{proposition}\label{DoobDecomposition4Gafter}
Let  $X\in {\cal M}_{0,loc}({\mathbb  F })$, and  $\tau$ be  an $\mathbb F$-honest time.  Then the process ${\cal T}_1(X)$ defined by 
 \begin{eqnarray}\label{Mone}
{\cal T}_1(X):=(1-Z_{-})\is X-I_{\{\widetilde Z=1\}}\is  [X, m]+\left(I_{\{\widetilde Z=1\}}\is [X,m]\right)^{p,\mathbb F},
\end{eqnarray}
is an $\mathbb F$-local martingale,  and the processes ${\widehat{X}^{(a)}}$ and ${\cal T}_a(X)$  given by
\begin{eqnarray}
{\widehat{X}^{(a)}} &&:= I_{\Rbrack \tau, +\infty \Lbrack}\is X  +
   {I_{\Rbrack \tau, +\infty \Lbrack}}\left(1-Z_{-}\right)^{-1}\is \langle X,  m\rangle^{\mathbb F}, \label{Mafter} \\
{\cal T}_a(X)&&:=I_{\Rbrack\tau,+\infty\Lbrack}\is {\widehat X}^{(a)} +\frac{I_{\Rbrack\tau,+\infty\Lbrack}}{1-\widetilde Z} \is[X, m]- \frac{I_{\Rbrack\tau,+\infty\Lbrack}}{1-Z_{-}} \is\left( I_{\{\widetilde{Z}<1\}}\is [X,m]   \right)^{p,\mathbb F}\nonumber\\
&&=I_{\Rbrack\tau,+\infty\Lbrack}\is X +\frac{I_{\Rbrack\tau,+\infty\Lbrack}}{1-\widetilde Z} \is[X, m]+ \frac{I_{\Rbrack\tau,+\infty\Lbrack}}{1-Z_{-}} \is\left(\sum  I_{\{\widetilde{Z}=1>Z_{-}\}}(1-Z_{-})\Delta X\right)^{p,\mathbb F}\label{honestforwholeprocess}
 \end{eqnarray}
 are  $\mathbb{G}$-local martingales. Furthermore, $\widehat{X}^{(a)}\in {\cal M}_{0,lc}^2(\mathbb G)$ and ${\cal T}_1(X)\in {\cal M}^2_{0,loc}(\mathbb F)$ when $X\in {\cal M}^2_{0,loc}(\mathbb F)$.
 \end{proposition}
\begin{proof} It is clear that ${\cal T}_1(X)$ is an $\mathbb F$-local martingale  (respectively it belongs to ${\cal M}_{0,loc}^2(\mathbb F)$) as long as $X\in {\cal M}_{0,loc}(\mathbb F)$ (respectively $X\in {\cal M}_{0,loc}^2(\mathbb F)$). Let $X \in {\cal M}_{0,loc}(\mathbb F)$. The proof of ${\widehat{X}^{(a)}} \in {\cal M}_{0,loc}(\mathbb G)$ can be found in Jeulin \cite{Jeu}  (see also Barlow \cite{Barlow}). Since the $\mathbb G$-predictable process with finite variation $  {I_{\Rbrack \tau, +\infty \Lbrack}}\left(1-Z_{-}\right)^{-1}\is \langle X,  m\rangle^{\mathbb F}$ is $\mathbb G$-locally bounded,  and 
$\sup_{0\leq s\leq\cdot}(X_s-X_{s\wedge\tau})^2\leq 4\sup_{0\leq s\leq\cdot}X_{s}^2$, we deduce that ${\widehat{X}^{(a)}} \in {\cal M}_{0,loc}^2(\mathbb G)$ as soon as $X \in {\cal M}_{0,loc}^2(\mathbb F)$. Thus, by combining ${\widehat{X}^{(a)}} \in {\cal M}_{0,loc}(\mathbb G)$  and Lemma \ref{lemmecrucialapresdefault}-(b), we deduce that ${\cal T}_a(X)\in  {\cal M}_{0,loc}(\mathbb G)$. This ends the proof of the proposition.\end{proof}
The following proposition extends Propositions \ref{propositionmzerotaob}-\ref{prooTbXcompensators} to the part-after-$\tau$.
\begin{proposition}\label{Taaftertau}
Let $X, Y$ be a quasi-left-continuous elements of ${\mathcal M}_{loc}(\mathbb F)$, and let $M:=M^S$ that belongs to ${\mathcal M}^2_{loc}(\mathbb F)$ and is defined in (\ref{S-DoobMeyerDecomposition}). Then the following assertions hold.\\
{\rm{(a)}} It holds that
\begin{equation}\label{XhatYbarAfter}
[\widehat X^{(a)}, {\cal T}_a(Y)]=[\widehat Y^{(a)},  {\cal T}_a(X)]={{1-Z_{-}}\over{1-\widetilde Z}}I_{\Rbrack\tau,+\infty \Lbrack}  \is [ Y, X].
\end{equation}
Hence, $\langle\widehat X^{(a)}, {\cal T}_a(Y)\rangle^{\mathbb G}$ exists when $X,Y\in {\cal M}_{0,loc}^2(\mathbb F)$, and is given by 
\begin{eqnarray}\label{<x,Y>GafterTau}
\langle \widehat X^{(a)}, {\cal T}_a(Y)\rangle^{\mathbb G}= {{I_{\Rbrack\tau,+\infty\Lbrack}}\over{(1-Z_{-})^2}} \is\langle {\cal T}_1(Y), {\cal T}_1(X)\rangle^{\mathbb F}=I_{\Rbrack\tau,+\infty\Lbrack}\is\left(I_{\{\widetilde Z<1\}}\is [Y,X]\right)^{\mathbb F}.
\end{eqnarray}
{\rm (b)} Suppose $X\in {\cal M}_{0,loc}^2(\mathbb F)$. Then the following equalities hold
\begin{eqnarray}\label{T1Xhat}
\langle \widehat{X}^{(a)},  \widehat{X}^{(a)} \rangle^{\mathbb G} &&= \frac{I_{\Rbrack\tau,+\infty\Lbrack}}{1-Z_{-}} \is 
\left( (1-\widetilde{Z})\is [X, X] \right)^{\mathbb F},  \label{mmhatbefore}\\
{\cal T}_a\left({\cal T}_1(X)\right)&&= {\cal T}_a(X),\quad\mbox{and}\quad  \widehat{{\cal T}_1(X)}^{(a)}= {\widehat X}^{(a)}.\label{TbmzeroandTbmAfter}
\end{eqnarray}
{\rm (c)}  There exists a unique ($P\otimes d\langle {\cal T}_1(M)\rangle^{\mathbb F}$-a.e.) $\mathbb F$-predictable process $ \widetilde\psi  $ such that
\begin{eqnarray}
&&I_{\{Z_{-}<1\}}\is \left((1-\widetilde Z) \is [M,M]\right)^{p,\mathbb F}= {\widetilde\psi} \is \langle {\cal T}_1(M)\rangle^{\mathbb F},\quad 0\leq {\widetilde\psi} \leq (1-Z_{-})^{-1}I_{\{Z_{-}<1\}},\label{processphiafter1}\\
&&I_{\{\widetilde\psi=0\}}\is{\widehat M}^{(a)}\equiv 0,\quad \mbox{and}\quad  \displaystyle{{\theta}\over{\sqrt{\widetilde\psi}}}I_{\{\widetilde\psi>0\}}\in L^2_{loc}({\widehat M}^{(a)}, \mathbb G),\quad \forall
\ \theta\in L^2_{loc}(M, \mathbb F).\label{processphiafter2}
\end{eqnarray}
{\rm{(d)}}  Let $\theta^{\mathbb G}$ be a $\mathbb G$-predictable process. Then $\theta^{\mathbb G}\in L^2_{loc}({\widehat M}^{(a)}, \mathbb G)$ if and only if there exists and $\mathbb F$-predictable process $\theta$ such that $\theta^{\mathbb G}=\theta$ on $\Rbrack\tau,+\infty\Lbrack$, and 
$\theta\sqrt{\widetilde\psi}\in L^2_{loc}({\cal T}_1(M),\mathbb F)$. 

\end{proposition}
\begin{proof}
(a) Thanks to the quasi-left-continuity of $X$ and $Y$, the processes we derive   $ \frac{I_{\Rbrack \tau, \infty \Lbrack}}{1-Z_{-}}\is\langle X, \mm\rangle^{\mathbb F}$, $ \frac{I_{\Rbrack \tau, \infty \Lbrack}}{1-Z_{-}}\is\langle Y, \mm\rangle^{\mathbb F}$ and ${1_{\Rbrack \tau, \infty \Lbrack}\over{1- Z_{-}}}\is\left(I_{\{\widetilde Z<1\}}\is [Y,m]\right)^{p,\mathbb F}$ are continuous with finite variation. Hence, we derive 
\begin{eqnarray*}
[\widehat X^{(a)}, {\cal T}_a(Y)] &= & \left[X  + \frac{I_{\Rbrack \tau, \infty \Lbrack}}{1-Z_{-}}\is\langle X, \mm\rangle^{\mathbb F},
\widehat{Y}^{(a)} + {I_{\Rbrack \tau, \infty \Lbrack}\over{1 - \widetilde Z}} \is[Y, m] -
{1_{\Rbrack \tau, \infty \Lbrack}\over{1- Z_{-}}}\is\left(I_{\{\widetilde Z<1\}}\is [Y,m]\right)^{p,\mathbb F}
\right]\\
&= & I_{\Rbrack \tau, \infty \Lbrack}\is \left[X , 
 Y  +  {1\over{1 - \widetilde Z}}I_{\Rbrack \tau, \infty \Lbrack}\is[Y, m]
\right]={{1- Z_{-}}\over{1- \widetilde Z}}I_{\Rbrack \tau, \infty \Lbrack} \is [Y, X].
\end{eqnarray*}
Therefore, (\ref{XhatYbarAfter}) folows from this equality and the symmetry role of $X$ and $Y$. \\
When $X, Y\in {\cal M}_{0,loc}^2(\mathbb F)$, by combining (\ref{XhatYbarAfter}),  Lemma \ref{lemmecrucialapresdefault} , and $\langle {\cal T}_1(X)\rangle^{\mathbb F}=(1-Z_{-})^2\is \left(I_{\{\widetilde Z<1\}}\is [X, X]\right)^{p,\mathbb F}$, we easily derive (\ref{<x,Y>GafterTau}).\\
(b) Again, thanks to the quasi-left-continuity of $X$, we get  
\begin{eqnarray*}
\left[ \widehat{X}^{(a)}, \widehat{X}^{(a)}\right] = I_{\Rbrack \tau, \infty\Lbrack}\is [X,X].
\end{eqnarray*}
Thus, by applying Lemma  \ref{lemmecrucialapresdefault}-(b) to this equality, we obtain (\ref{mmhatbefore}). 
By combining (\ref{Mone}), (\ref{Mafter} ), and the quasi-left-continuity of $X$, on the one hand, we deduce that  
\begin{eqnarray*}
\widehat{{\cal T}_1(X)}^{(a)} &=& I_{\Rbrack \tau, \infty\Lbrack} \is {\cal T}_1(X)  + I_{\Rbrack \tau, \infty\Lbrack}(1- Z_{-})^{-1}\is\langle {\cal T}_1(X), m\rangle^{\mathbb F} \\
&=&   I_{\Rbrack \tau, \infty\Lbrack}\is X   +  I_{\Rbrack \tau, \infty\Lbrack}\is \left(\sum \Delta X I_{\{\widetilde Z=1>Z_{-}\}}\right)^{p,\mathbb F}
+ {{I_{\Rbrack \tau, \infty\Lbrack}}\over{1-Z_{-}}}  \is\langle X-\sum \Delta X I_{\{\widetilde Z=1>Z_{-}\}}, m\rangle^{\mathbb F}  \\
&=& \widehat{X}^{(a)} + I_{\Rbrack \tau, \infty\Lbrack}\is \left(\sum \Delta X I_{\{\widetilde Z=1>Z_{-}\}}\right)^{p,\mathbb F}
- {{I_{\Rbrack \tau, \infty\Lbrack}}\over{1-Z_{-}}} \is\langle  \sum \Delta X I_{\{\widetilde Z=1>Z_{-}\}}, m\rangle^{\mathbb F}  \\
&=& \widehat{X}^{(a)} + I_{\Rbrack \tau, \infty\Lbrack}\is \left(\sum \Delta M I_{\{\widetilde Z=1>Z_{-}\}}\right)^{p,\mathbb F}
-{{I_{\Rbrack \tau, \infty\Lbrack}}\over{1-Z_{-}}}  \is\left( \sum \Delta m \Delta X I_{\{\widetilde Z=1>Z_{-}\}} \right)^{p,\mathbb F}  \\
&=& \widehat{X}^{(a)}.
\end{eqnarray*}
On the other hand, we calculate
\begin{eqnarray*}
{\cal T}_a({\cal T}_1(X))&=& \widehat{{\cal T}_1(X)}^{(a)} + {1\over{1-\widetilde Z}}I_{\Rbrack \tau, \infty\Lbrack}\is[{\cal T}_1(X), m] - {1\over{1-Z_{-}}}I_{\Rbrack \tau, \infty\Lbrack}\is\left(I_{\{\widetilde Z<1\}}\is [{\cal T}_1(X),m]\right)^{p,\mathbb F} \\
&=& \widehat{X}^{(a)} + {1\over{1- \widetilde Z}}I_{\Rbrack \tau, \infty\Lbrack}\is \left[X-\sum \Delta X I_{\{\widetilde Z=1>Z_{-}\}}+\left(\sum \Delta X I_{\{\widetilde Z=1>Z_{-}\}}\right)^{p,\mathbb F}, m\right] \\
&& - {1\over{1-Z_{-}}}I_{\Rbrack \tau, \infty\Lbrack}\is\left(I_{\{\widetilde Z<1\}}\is \left[X-\sum \Delta X I_{\{\widetilde Z=1>Z_{-}\}}+\left(\sum \Delta X I_{\{\widetilde Z=1>Z_{-}\}}\right)^{p,\mathbb F},m \right]\right)^{p,\mathbb F} \\
&=&  \widehat{X}^{(a)}  + {1\over{1- \widetilde Z}}I_{\Rbrack \tau, \infty\Lbrack}\is \left[X,m\right] - {{I_{\Rbrack \tau, \infty\Lbrack}}\over{1-Z_{-}}} \is\left(I_{\{\widetilde Z<1\}}\is [X,m]\right)^{p,\mathbb F} = {\cal T}_a(X).
\end{eqnarray*}
These equalities prove (\ref{TbmzeroandTbmAfter}).\\
(c) Consider the following quasi-left-continuous  $\mathbb F$-local martingale
\begin{eqnarray*}
N:=(1-Z_{-})\is M - [M,m] + \langle M, m\rangle^{\mathbb F},
\end{eqnarray*}
which is locally square integrable with jumps $\Delta N = (1-\widetilde{Z} )\Delta M $. Thus, a direct application of the GKW-decomposition leads to the existence of a unique pair $(\widetilde\psi, L)\in L^2_{loc}(M_0,\mathbb F)\times {\cal M}_{0,loc}^2(\mathbb F)$ satisfying 
\begin{eqnarray}\label{GKW4N}
N={\widetilde\psi}\is {\cal T}_1(M)+L,\quad \langle {\cal T}_1(M), L\rangle^{\mathbb F}\equiv 0.\end{eqnarray}
From the definitions  of $N$ and ${\cal T}_1(M)$, we derive  
\begin{eqnarray*}
I_{\{Z_{-}<1\}}\is \left((1-\widetilde Z) \is [M,M]\right)^{p,\mathbb F}&&=I_{\{Z_{-}<1\}}\is\langle N,M\rangle^{\mathbb F}=(1-Z_{-})^{-1}I_{\{Z_{-}<1\}}\is\langle N,{\cal T}_1(M)\rangle^{\mathbb F}\\
&&={{\widetilde\psi}\over{1-Z_{-}}}I_{\{ Z_{-}<1\}}\is \langle{\cal T}_1(M),  {\cal T}_1(M)\rangle^{\mathbb F}.
\end{eqnarray*}
Since $(1-\widetilde Z)I_{\{Z_{-}<1\}} \is [M,M]$ is nondecreasing and dominated by $$I_{\{\widetilde Z<1\}}I_{\{Z_{-}<1\}}\is [M,M]=(1-Z_{-})^{-2}I_{\{ Z_{-}<1\}}\is[{\cal T}_1(M), {\cal T}_1(M)],$$ we deduce that $0\leq \widetilde\psi\leq (1-Z_{-})^{-1}I_{\{Z_{-}<1\}}$, and the proof of  (\ref{processphiafter1}) is complete. Furthermore, by using (\ref{T1Xhat}) and (\ref{processphiafter1}), on the other hand, we derive 
\begin{eqnarray*}
\langle I_{\{\widetilde\psi=0\}}\is{\widehat M}^{(a)}, I_{\{\widetilde\psi=0\}}\is{\widehat M}^{(a)}\rangle^{\mathbb G}=I_{\{\widetilde\psi=0\}}\is\langle {\widehat M}^{(a)}, {\widehat M}^{(a)}\rangle^{\mathbb G}=
 \frac{I_{\Rbrack\tau,+\infty\Lbrack}}{1-Z_{-}} I_{\{\widetilde\psi=0\}}{\widetilde\psi}\is\langle {\cal T}_1(M),  {\cal T}_1(M)\rangle^{\mathbb F}\equiv 0.
\end{eqnarray*}
On the other hand, for any $\theta\in L^2_{loc}(M, \mathbb F)$, we get 
\begin{eqnarray*}
\langle {{\theta}\over{\sqrt{\widetilde\psi}}}I_{\{\widetilde\psi>0\}}\is{\widehat M}^{(a)},  {{\theta}\over{\sqrt{\widetilde\psi}}}I_{\{\widetilde\psi>0\}}\is{\widehat M}^{(a)}\rangle^{\mathbb G}&&= 
 {{\theta^2}\over{\widetilde\psi}}I_{\{\widetilde\psi>0\}}\is\langle {\widehat M}^{(a)},  {\widehat M}^{(a)}\rangle^{\mathbb G}\\
 &&=
 {{\theta^2I_{\{\widetilde\psi>0\}}}\over{1-Z_{-}}}I_{\Rbrack \tau, \infty\Lbrack}\is\langle {\cal T}_1( M),   {\cal T}_1( M)\rangle^{\mathbb F}\leq  \theta^2\is\langle M,  M\rangle^{\mathbb F}.
 \end{eqnarray*}
 This ends the proof of assertion (c). To prove assertion (d), we  consider a $\mathbb G$-predictable process $\theta^{\mathbb G}$. Then there exists an $\mathbb F$-predictable process $\theta$ that coincides with $\theta^{\mathbb G}$ on $\Rbrack \tau, \infty\Lbrack$. Then we derive 
 \begin{eqnarray*}
 \langle \theta^{\mathbb G}\is{\widehat M}^{(a)},  \theta^{\mathbb G}\is{\widehat M}^{(a)}\rangle^{\mathbb G}&&=(\theta^{\mathbb G})^2\is\langle{\widehat M}^{(a)}, {\widehat M}^{(a)}\rangle^{\mathbb G}={{\theta^2}\over{1-Z_{-}}}I_{\Rbrack \tau, \infty\Lbrack}\is \left((1-\widetilde Z)\is [M,M]\right)^{p,\mathbb F}\\
 &&={{(\theta\sqrt{\widetilde\psi})^2}\over{1-Z_{-}}}I_{\Rbrack \tau, \infty\Lbrack}\langle {\cal T}_1( M),   {\cal T}_1( M)\rangle^{\mathbb F}.
 \end{eqnarray*}
Therefore, a combination of this equality and Lemma \ref{lemmecrucialapresdefault}-(b), assertion (d) follows immediately. 
  This ends the proof of the proposition.
\end{proof}

The next theorem investigates the GKW-decomposition of ${\cal T}_a(M)$ with respect to ${\widehat M}^{(a)}$, when $M$ belongs to $ {\cal M}_{0,loc}^2(\mathbb F)$ . It extends Theorem \ref{GKW4G}  to the ``part-after-$\tau$". 

\begin{theorem}\label{GKW4GAfter} Suppose that $M$, be given in (\ref{S-DoobMeyerDecomposition}), is quasi-left-continuous element of ${\cal M}_{0,loc}^2(\mathbb  F)$, and let ${\cal T}_1(M)$ be given by (\ref{Mzero}). Then the following assertions hold.\\
 {\rm{(a)}}  There exists $\widetilde L\in {\cal M}_{0,loc}(\mathbb G)$ such that $[\widetilde L, {\widehat M}^{(a)}]\in {\cal M}_{0,loc}(\mathbb G)$  and 
 \begin{eqnarray}\label{GKW4Tb(M)}
 \sqrt{\widetilde\psi}\is{\cal T}_a(M)={{(1-Z_{-})^{-1}}\over{\sqrt{\widetilde\psi}}} I_{\{\widetilde\psi>0\}}\is {\widehat M}^{(a)}+\sqrt{\widetilde\psi}\is\widetilde{L}.
 \end{eqnarray}
 {\rm{(b)}}  For any quasi-left-continuous $N\in {\cal M}_{0,loc}^2(\mathbb F)$, there exist unique pair $( \varphi_N, L_N)$ that belongs to  $L^2_{loc}({\cal T}_1(M),\mathbb F)\times {\cal M}_{0,loc}(\mathbb G)$ and satisfies  
 \begin{eqnarray}\label{GKW4Tb(N)}
\sqrt{\widetilde\psi}\is {\cal T}_a(N)={{(1-Z_{-})^{-1} \varphi_N}\over{\sqrt{\widetilde\psi}}}I_{\{\widetilde\psi>0\}}\is {\widehat M}^{(b)}+\sqrt{\widetilde\psi}\is L_N,\quad [L_N, {\widehat M}^{(a)}]\in {\cal M}_{0,loc}(\mathbb G),
 \end{eqnarray}
 with  \begin{eqnarray}\label{VarphiN}
 \varphi_N :=  \frac{{d\langle N, {\cal T}_1(M)}\rangle^{\mathbb F}}{d\langle {\cal T}_1(M)\rangle^{\mathbb F}}.
 \end{eqnarray}
\end{theorem}

\begin{proof}
The proof mimics exactly the proof of Theorem \ref{GKW4G} by considering ${\cal T}_{(a,n)}(N)$ given by  
\begin{eqnarray}\label{Nn}
{\cal T}_{(a,n)}(N):=\widehat{N}^{(a)} + {I_{\Rbrack  \tau, +\infty \Lbrack}\over{1 - Z_{-}}}\is{\cal T}_a(N_n), \mbox{where} \ N_n:=I_{\{1 - \widetilde Z\geq n^{-1}\}}\is[N, m]-
\left(I_{\{1-\widetilde Z\geq n^{-1}\}}\is [N,m]\right)^{p,\mathbb F},\nonumber
\end{eqnarray} 
instead. Thus, we will omit the details of this proof herein. 
\end{proof}


\subsection{Particular cases and examples for the model $(S-S^{\tau},\mathbb G)$}
This subsection treats some particular cases, such as the case when $S$ is a continuous process, and an example  of $(S, \tau)$ for which the SC feature might fails for $(S-S^{\tau},\mathbb G)$.
\begin{theorem}\label{continuouscaseAfter} Suppose that  $\tau$ fulfills (\ref{conditiononTau}). If $(S,\mathbb F)$ satisfies the SC with its market's price of risk $\widehat\lambda^{\mathbb F}$, and either $S$ or $m$ is continuous, then the the model $(S-S^{\tau},\mathbb G)$ fulfills the SC, and its market's price of risk, $\widehat\lambda^{\mathbb G}$, is given by 
\begin{eqnarray}\label{MarketPriceofRiskContinuousAfter}
{\widehat\lambda}^{\mathbb G} := \left( {\widehat\lambda}^{\mathbb F}+\frac{\beta^{(m)}}{1-Z_{-}} \right) I_{\Rbrack\tau,+\infty \Lbrack},
\quad\mbox{with}\quad \beta^{(m)}:={{d\langle m,M^S\rangle^{\mathbb F}}\over{d\langle M^S\rangle^{\mathbb F}}}.
\end{eqnarray}
 \end{theorem}
\begin{proof}
The proof is analogy to that of Theorem \ref{continuouscaseBefore} and we omit the details. 
\end{proof}

Below, we present an example where structure condition may fail. 
\begin{example}\label{example2} Herein, we present an example that when $\{\Delta X \neq 0\}\cap \{1 = \widetilde{Z} >Z_{-}\} \neq \emptyset$, $X- X^\tau$ fails to satisfy SC$(\mathbb{G})$. We suppose given a Poisson process  $N$, with intensity rate $\lambda>0$, and  ${\mathbb F}$ is the natural augmented filtration of $N$. Suppose that the stock price process $X$ is given by
$$
dX=X_{-}\sigma  dK,\, \ \ X_0=1,\ \ \ \ \ K_t:=N_t-\lambda t,$$
 or equivalently $X_t= \exp (-\lambda \sigma t +\ln (1+\sigma) N_t)$, where   $\sigma > 0$. In what follows, we introduce the notations
$$
 a:= -\frac{1}{\ln (1+\sigma)}\ln b, \ \ 0<b<1,  \ \ \mu:= \frac{\lambda \sigma}{\ln (1+\sigma)}\ \ \mbox{and}\ \  Y_t:=  \mu t -  N_t.$$
 We associate to the process $Y$ its ruin probability, denoted by
$\Psi(x)$   given by, for $x\geq 0$,  \begin{equation} \label{Psi0}\Psi(x)=P(T^x<\infty)
,\quad \mbox{ with} \quad
 T^x=\inf\{t: x+Y_t <0\}\, .\end{equation}
 
\begin{proposition} Consider the model $(X,\mathbb F)$ of Example \ref{example2}, and the following random time
\begin{equation}\label{suppoi}
\tau:=  \sup \{t:\,X_t \geq b\}=\sup \{t: Y_t \leq a \}.
\end{equation}
Then $(X- X^\tau, \mathbb{G})$ fails to satisfy SC.
\end{proposition}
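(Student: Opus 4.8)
The plan is to mimic the structure of the proof of the earlier proposition (the one with the Poisson process and $\tau = k_1 T_1 + k_2 T_2$), but now working on the interval $\Rbrack \tau, +\infty\Lbrack$ rather than $\Lbrack 0,\tau\Rbrack$. First I would record the explicit forms of the Az\'ema supermartingale $Z$, of $\widetilde Z$, and of the martingale $m = Z + D^{o,\mathbb F}$ for the last-passage time $\tau = \sup\{t : Y_t \le a\}$; these are classical for last-passage times and can be written in terms of the ruin probability $\Psi$ evaluated along the path of $Y_t = \mu t - N_t$. The key qualitative point is that since $\tau$ is a last-passage time of a process that jumps downward by $1$ (because $N$ jumps up by $1$), the supermartingale $\widetilde Z$ hits the value $1$ exactly at the jump times of $N$ that cause $Y$ to drop to the critical level, so the thin set $\{\Delta X \neq 0\}\cap\{\widetilde Z = 1 > Z_{-}\}$ is genuinely nonempty. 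I would verify this explicitly from the formula for $\widetilde Z$.

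Next, using the honest-time decomposition of $S - S^\tau$ from Jeulin (the $\mathbb G$-semimartingale decomposition on $\Rbrack\tau,+\infty\Lbrack$, which introduces the drift term $\tfrac{1}{1-Z_{-}}I_{\Rbrack\tau,+\infty\Lbrack}\centerdot\langle X, m\rangle^{\mathbb F}$ alongside the $\mathbb G$-local martingale part $\widehat X$ — presumably the process the paper denotes via some displayed equation in Section~\ref{sec:aftertau}), I would compute both $\tfrac{1}{1-Z_{-}}I_{\Rbrack\tau,+\infty\Lbrack}\centerdot\langle X, m\rangle^{\mathbb F}$ and the $\mathbb G$-predictable bracket $\langle \widehat X,\widehat X\rangle^{\mathbb G}$ explicitly for this model, exactly as was done in the earlier proposition. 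I expect the drift term to be supported on $\Rbrack\tau,+\infty\Lbrack$ while the bracket $\langle\widehat X,\widehat X\rangle^{\mathbb G}$ turns out to be supported on a disjoint set — the complement of the neighbourhood of the jump causing $\widetilde Z = 1$ — because the jump of $m$ at that time cancels the relevant contribution to the quadratic variation. Once the drift lives on a stochastic set disjoint from the support of $d\langle\widehat X,\widehat X\rangle^{\mathbb G}$, there can be no $\mathbb G$-predictable $\widehat\lambda \in L^2_{loc}(\widehat X,\mathbb G)$ with $\widehat\lambda\centerdot\langle\widehat X,\widehat X\rangle^{\mathbb G}$ equal to that drift, so $X - X^\tau$ cannot satisfy SC$(\mathbb G)$.

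The main obstacle, I expect, will be obtaining clean closed-form expressions for $Z$, $\widetilde Z$ and $m$ for the last-passage time $\tau = \sup\{t : Y_t \le a\}$; unlike the first example where $\tau$ is built from the first two jump times and everything is elementary, here the relevant quantities involve the ruin probability $\Psi$ and its behaviour under the strong Markov property of the compound process $Y$. Concretely one needs $Z_t = P(\tau > t \mid \mathcal F_t)$ expressed as $1$ before $Y$ first reaches $a$ and as a function of $x + Y_t - Y_{\text{(last time at }a)}$ afterwards, and one needs to identify precisely at which jump times $\widetilde Z = 1$ while $Z_{-} < 1$. Granting those formulas, the bracket computations are of the same routine flavour as in the earlier proposition — it is just jump calculus for a compensated Poisson integral — and the disjointness argument closes the proof.

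\begin{proof}
We keep the notation of Example~\ref{example2}. Since $\tau$ is a last-passage time for the process $Y$ that has only downward jumps of size $1$ (occurring at the jump times of $N$), a standard computation (see Jeulin \cite{Jeu} and, for last-passage-time supermartingales of spectrally one-sided processes, the references therein) yields the Az\'ema supermartingale $Z$, its companion $\widetilde Z$, and the martingale $m=Z+D^{o,\mathbb F}$ in terms of the ruin probability $\Psi$ in \eqref{Psi0}: on $\Lbrack 0,\tau\Lbrack$ one has $\widetilde Z_t = 1$ precisely at those jump times of $N$ at which $Y$ reaches the level $a$ from above, while $Z_{t-}<1$ there because $Z$ has strictly decreased before that jump. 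Consequently the thin set $\{\Delta X\neq 0\}\cap\{\widetilde Z = 1 > Z_{-}\}$ is nonempty, so \eqref{CSaftertaucondition} fails.

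Because $\tau$ is honest and $Z_\tau<1$ a.s., the $\mathbb G$-semimartingale decomposition of $S-S^\tau=X-X^\tau$ on $\Rbrack\tau,+\infty\Lbrack$ takes the form
\begin{equation*}
X - X^\tau = \widehat X + \frac{1}{1-Z_{-}}I_{\Rbrack\tau,+\infty\Lbrack}\centerdot\langle X, m\rangle^{\mathbb F},
\end{equation*}
with $\widehat X$ a $\mathbb G$-locally square integrable local martingale. A direct computation, entirely analogous to the one performed in the proposition of Section~\ref{StructureConditions}, shows that the drift $\tfrac{1}{1-Z_{-}}I_{\Rbrack\tau,+\infty\Lbrack}\centerdot\langle X, m\rangle^{\mathbb F}$ is carried by the stochastic interval $\Rbrack\tau,+\infty\Lbrack$, whereas the jump of $m$ at the passage time of $Y$ to the level $a$ cancels exactly the corresponding contribution to $\langle \widehat X,\widehat X\rangle^{\mathbb G}$, so that $\langle \widehat X,\widehat X\rangle^{\mathbb G}$ is carried by a stochastic set disjoint from $\Rbrack\tau,+\infty\Lbrack$. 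Therefore there is no $\mathbb G$-predictable $\widehat\lambda\in L^2_{loc}(\widehat X,\mathbb G)$ with $\widehat\lambda\centerdot\langle\widehat X,\widehat X\rangle^{\mathbb G}$ equal to that drift, and hence $X-X^\tau$ does not satisfy SC$(\mathbb G)$. This ends the proof of the proposition.
\end{proof}
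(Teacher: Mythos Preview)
Your overall strategy matches the paper's: write the $\mathbb G$-canonical decomposition of $X-X^\tau$, compute the drift $\frac{1}{1-Z_{-}}I_{\Rbrack\tau,+\infty\Lbrack}\centerdot\langle X,m\rangle^{\mathbb F}$ and the bracket $\langle\widehat X,\widehat X\rangle^{\mathbb G}$ explicitly, and observe that on some nontrivial stochastic set the drift is nonzero while the bracket vanishes. However, your execution contains a genuine error and is too sketchy to be a proof.

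The error is in your claim that $\langle\widehat X,\widehat X\rangle^{\mathbb G}$ is ``carried by a stochastic set disjoint from $\Rbrack\tau,+\infty\Lbrack$''. This is false. In the paper's computation (using the explicit formulas $Z_t=1+I_{\{Y_t\ge a\}}(\Psi(Y_t-a)-1)$ and $\Delta m=\bigl(I_{\{Y_->a+1\}}\phi_1-I_{\{Y_->a\}}\phi_2\bigr)\Delta N$ with $\phi_1=\Psi(Y_--a-1)-1$, $\phi_2=\Psi(Y_--a)-1$, taken from Aksamit et al.), one finds
\[
I_{\Rbrack\tau,+\infty\Lbrack}\centerdot\langle\widehat X,\widehat X\rangle^{\mathbb G}_t
=\lambda\sigma^2\!\int_0^t\frac{\phi_1(u)X_{u-}^2}{\phi_2(u)}\,I_{\{Y_{u-}>a+1\}}I_{\Rbrack\tau,+\infty\Lbrack}\,du,
\]
which \emph{is} supported inside $\Rbrack\tau,+\infty\Lbrack$; it merely vanishes on the sub-region $\{a<Y_{-}\le a+1\}$. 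On that same sub-region the drift reduces to $\lambda\sigma\int_0^t X_{u-}I_{\{Y_{u-}>a\}}I_{\Rbrack\tau,+\infty\Lbrack}\,du\neq 0$. It is this mismatch on $\{a<Y_{-}\le a+1\}$ --- not disjointness of the bracket from the whole interval $\Rbrack\tau,+\infty\Lbrack$ --- that blocks the existence of $\widehat\lambda$. Your disjointness statement, if taken literally, would force $\widehat X$ to have zero bracket on all of $\Rbrack\tau,+\infty\Lbrack$, which is not the case.

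Two smaller points: (i) the sign in your decomposition is off --- from $\widehat X=I_{\Rbrack\tau,+\infty\Lbrack}\centerdot X+\frac{1}{1-Z_{-}}I_{\Rbrack\tau,+\infty\Lbrack}\centerdot\langle X,m\rangle^{\mathbb F}$ one gets $X-X^\tau=\widehat X-\frac{1}{1-Z_{-}}I_{\Rbrack\tau,+\infty\Lbrack}\centerdot\langle X,m\rangle^{\mathbb F}$; (ii) instead of merely asserting the formulas for $Z$ and $m$ and the location of $\{\widetilde Z=1>Z_-\}$, you need to either derive them or cite them precisely (the paper cites Aksamit et al.), and then actually carry out the two bracket computations. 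Without those computations your argument remains a plan, not a proof.
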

\begin{proof}We recall from Aksamit et al. \cite{aksamit/choulli/deng/jeanblanc} that the supermartingale $Z$ and $m$ are given by
\begin{eqnarray*}
Z_t&=&P(\tau>t|{\cal F}_t)
 =\Psi(Y_t-a)I_{\{Y_t \geq a\}} + I_{\{Y_t < a\}}= 1+I_{\{Y_t \geq a\}}\left(\Psi(Y_t-a)-1\right),\\
\Delta  m&:=&\left(I_{\{Y_{-} > a+1\}}\phi_1  - I_{\{Y_{-}>a\}}\phi_2\right)\Delta N,\quad \phi_1:=\Psi(Y_{-}-a-1)-1,\quad \phi_2:=\Psi(Y_{-}-a)-1
\end{eqnarray*}
where $\Psi$ is defined in (\ref{Psi0}).
Then it is easy to calculate that
\begin{eqnarray*}
 \frac{1}{1-Z_{-}} I_{\Rbrack \tau, +\infty\Lbrack}\is \left \langle X, m\right \rangle_t &=&  -\lambda \sigma\int_1^t \frac{X_{u-}}{\phi_2(u)} \left\{ I_{\{Y_{u-} > a+1\}}\phi_1(u)  - I_{\{Y_{u-}>a\}}\phi_2(u) \right\}  I_{\Rbrack \tau, +\infty\Lbrack}(u) du,  \ \mbox{and}\\
 I_{\Rbrack \tau, +\infty\Lbrack}\is\left\langle \widehat{X}^{(a)}, \widehat{X}^{(a)} \right\rangle^\mathbb{G}_t &=&I_{\Rbrack \tau, +\infty\Lbrack}\frac{1}{1-Z_{-}}\is \left( (1-\widetilde{Z})\is [X,X] \right)^{p,\mathbb{F}}_t \\
 &=& \lambda \sigma^2 \int_1^t \frac{\phi_1(u) X_{u-}^2}{\phi_2(u)} I_{\{Y_{u-} >a+1\}} I_{\Rbrack \tau, +\infty\Lbrack}(u) du,
\end{eqnarray*}
where $\widehat{X}^{(a)}$ is defined via (\ref{honestforwholeprocess}). Notice that on the interval  $\{a+1 \geq Y_{-} >a\}$,
\begin{eqnarray*}
 \frac{1}{1-Z_{-}} I_{\Rbrack \tau, +\infty\Lbrack}\is \left \langle X, m\right \rangle_t &=&  \lambda \sigma\int_1^t  {X_{u-}} I_{\{Y_{u-}>a\}}   I_{\Rbrack \tau, +\infty\Lbrack} du, \ \   \mbox{while} \
 I_{\Rbrack \tau, +\infty\Lbrack}\is\left\langle \widehat{X}^{(a)}, \widehat{X}^{(a)} \right\rangle^\mathbb{G} = 0.
\end{eqnarray*}
Hence, there is no $\mathbb{G}$-predictable process $\widehat{\lambda} \in L^2_{loc}(X)$ such that
\begin{eqnarray*}
 \frac{1}{1-Z_{-}} I_{\Rbrack \tau, +\infty\Lbrack}\is \left \langle X, m\right \rangle = I_{\Rbrack \tau, +\infty\Lbrack} \widehat{\lambda}\is\left\langle \widehat{X}^{(a)}, \widehat{X}^{(a)} \right\rangle^\mathbb{G}.
\end{eqnarray*}
Hence, $X-X^\tau$ fails to satisfy SC$(\mathbb{G})$.
\end{proof}
\end{example}

\subsection{The main results for $(S-S^{\tau}, \mathbb G)$}
The following is  the first main result of this section, where we treat fully  the problem {\bf(Prob1)} for the case when $S$ is quasi-left-continuous on the stochastic interval $\Rbrack \tau, +\infty\Lbrack$.
\begin{theorem}\label{generlaQLC}
 Suppose $S$ is quasi-left-continuous. Then the following are equivalent.\\
{\rm{ (a)}} $(S- S^{\tau}, \mathbb G)$ satisfies SC with  market's prices of risk $\widetilde{\lambda}^{\mathbb G}$.\\
 {\rm{ (b)}} $(S^{(1)},\mathbb F)$ satisfies SC, and its market's price of risk $\lambda^{(1,\mathbb F)}$ satisfies $\displaystyle {{\lambda^{(1,\mathbb F)}(1-Z_{-})+{\beta^{(1,m)}}}\over{\sqrt{ {\widetilde\psi} }}}I_{\{{\widetilde\psi} >0\}}$ belongs to $ L^2_{loc}({\cal T}_1(M),\mathbb F)$, where $S^{(1)}={\cal T}_1(M)+A^{(1)}$ given by 
 \begin{equation}\label{processSOne}
 S^{(1)}:= (1-Z_{-})\is S- I_{\{ \widetilde Z=1>Z_{-}\}}\is [M^S, m],\quad 
 A^{(1)}:=(1-Z_{-})\is A-\left(I_{\{ \widetilde Z=1>Z_{-}\}}\is [M^S, m]\right)^{p,\mathbb F}.
 \end{equation}
 
 Furthermore, $\widetilde{\lambda}^{\mathbb G}$ and ${\lambda}^{(1,\mathbb F)}$ are related via the following
 \begin{eqnarray}
 \widetilde{\lambda}^{\mathbb G}= {{(1-Z_{-}) {\lambda}^{(1,\mathbb F)}+{\beta^{(1,m)}} }\over{ \widetilde{\psi}}} I_{\Rbrack \tau, +\infty\Lbrack}\ 
 \mbox{and}\ 
 {\lambda}^{(1,\mathbb F)}={{^{p,\mathbb F}(\widetilde{\lambda}^{\mathbb G}I_{\Rbrack \tau, +\infty\Lbrack}) \widetilde{\psi}- {\beta^{(1,m)}} (1-Z_{-})}\over{(1-Z_{-})^2}}I_{\{Z_{-}<1\}},
   \label{G2Fafter}   \end{eqnarray}
  where
  \begin{eqnarray}
 \quad {\beta^{(1,m)}}:={{d\langle m, {\cal T}_1(M) \rangle^{\mathbb F}}\over{d\langle {\cal T}_1(M)\rangle^{\mathbb F}}},\quad\mbox{and}\quad  \widetilde\psi:={{d\left((1-\widetilde Z) \is [M,M]\right)^{p,\mathbb F}}\over{d \langle {\cal T}_1(M)\rangle^{\mathbb F}}}.\label{F2Gafter}
  \end{eqnarray}
 \end{theorem}

\begin{proof}
The proof of this theorem is achieved in two steps. The first step proves (b)$\Longrightarrow$ (a), while the second step proves the converse. \\
{\bf Step 1.} Here we prove (b)$\Longrightarrow$ (a).   Then the quasi-left-continuity of $S$ implies the quasi-left-continuity of $M:=M^S$. Thanks to Proposition \ref{Taaftertau}, we write
\begin{eqnarray*}\label{mM/MMquasileft}
\langle{\cal T}_a({M}),\widehat{M}^{(a)} \rangle^{\mathbb G}&=&{{I_{\Rbrack\tau,+\infty\Lbrack}}\over{(1-Z_{-})^2}}\is\langle {\cal T}_1(M)\rangle^{\mathbb F},\ \mbox{and}  \
 \langle{\cal T}_a(m),\widehat{M}^{(a)}\rangle^{\mathbb G} = {{I_{\Rbrack\tau,+\infty\Lbrack}}\over{(1-Z_{-})^2}}\is  \langle m,{\cal T}_1(M)\rangle^{\mathbb F} ,\\
 {\cal T}_a({\cal T}_1(M))&=& (1-Z_{-})\is{\cal T}_a(M),\quad\mbox{and}\quad  {\widehat{{\cal T}_1(M)}}^{(a)}= (1-Z_{-})\is{\widehat M}^{(a)}.
\end{eqnarray*}
Since  $(1-Z_{-})\is (S- S^{\tau})=S^{(1)}- (S^{(1)})^{\tau}$ and $S^{(1)}$ satisfies SC,  we deduce the existence  of an $\mathbb F$-predictable process ${\lambda}^{(1,\mathbb F)}$ such that $A^{(1)}={\lambda}^{(1,\mathbb F)}\is \langle {\cal T}_1(M) \rangle^{\mathbb F}$. Hence, by combining the above equalities and Proposition \ref{Taaftertau} , we derive
\begin{eqnarray*}
(1-Z_{-})\is(S- S^{\tau})&&= ({\cal T}_1(M))^{\tau}+(A^{(1)})^{\tau}\\
&&= (1-Z_{-})\is{\widehat M}^{(a)} - {1\over{1- Z_{-}}}I_{\Rbrack \tau, +\infty\Lbrack}\is\langle m, {\cal T}_1(M)\rangle^{\mathbb F}+{\lambda}^{(1,\mathbb F)}I_{\Rbrack \tau, +\infty\Lbrack}\is \langle {\cal T}_1(M) \rangle^{\mathbb F}\\
&&= (1-Z_{-})\is{\widehat M}^{(a)} - (1-Z_{-})\is\langle{\cal T}_a(m), {\widehat M}^{(a)}\rangle^{\mathbb G}+(1-Z_{-})^2{\lambda}^{(1,\mathbb F)}\is \langle {\cal T}_a(M), {\widehat M}^{(a)} \rangle^{\mathbb G}\\
&&=  (1-Z_{-})\is{\widehat M}^{(a)}+{{{\beta^{(1,m)}}+(1-Z_{-}){\lambda}^{(1,\mathbb F)}}\over{ (1-Z_{-})^2{\widetilde\psi} }}I_{\Rbrack \tau, +\infty\Lbrack}\is \langle  (1-Z_{-})\is{\widehat M}^{(a)}\rangle^{\mathbb G}.
\end{eqnarray*}
This combined with the facts that $(1-Z_{-})^{-2}I_{\Rbrack \tau, +\infty\Lbrack}$ is $\mathbb G$-locally bounded and 
\begin{eqnarray*}\label{equivalent}
{{{\beta^{(1,m)}}+(1-Z_{-}){\lambda}^{(1,\mathbb F)}}\over{ (1-Z_{-})^2{\widetilde\psi} }}I_{\Rbrack \tau, +\infty\Lbrack}\in L^2_{loc}({\widehat M}^{(a)},\mathbb G)\ \mbox{iff}\ 
{{{\beta^{(1,m)}}+(1-Z_{-}){\lambda}^{(1,\mathbb F)}}\over{\sqrt{\widetilde\psi} }}I_{\{ \widetilde\psi>0\}}\in L^2_{loc}({\cal T}_1( M),\mathbb F),
\end{eqnarray*}
assertion (a) follows, and  the proof of  the first step is completed.\\

{\bf Step 2.}  Herein, we prove (a) $\Longrightarrow$ (b). Suppose that  $(S - S^{\tau}, \mathbb G)$ satisfies SC , and denote its market's price by $\widetilde{\lambda}^{(\mathbb G)}$. Then due to 
$$(1-Z_{-})\is (S - S^{\tau})=S^{(1)} - (S^{(1)})^{\tau}=  {\widehat {{\cal T}_1(M)}}^{(a)}+ I_{\Rbrack \tau, +\infty\Lbrack}\is A^{(1)}- (1-Z_{-})^{-1}I_{\Rbrack \tau, +\infty\Lbrack}\is \langle m,{\cal T}_1(M)\rangle^{\mathbb F},$$ and   the existence of an $\mathbb F$-predictable process $\lambda$ that coincides with $ \widetilde{\lambda}^{(\mathbb G)}$ on $\Rbrack \tau, +\infty\Lbrack$,  we obtain 
\begin{eqnarray*}
I_{\Rbrack \tau, +\infty\Lbrack}\is A^{(1)}- (1-Z_{-})^{-1}I_{\Rbrack \tau, +\infty\Lbrack}\is \langle m,{\cal T}_1(M)\rangle^{\mathbb F}&&=\widetilde{\lambda}^{(\mathbb G)}\is \langle {\widehat M}^{(a)}\rangle^{\mathbb G}=\lambda\is \langle {\widehat M}^{(a)}\rangle^{\mathbb G}\\
&&=\lambda  {\widetilde\psi }(1-Z_{-})^{-1}I_{\Rbrack \tau, +\infty\Lbrack}\is \langle {\cal T}_1(M), {\cal T}_1(M)\rangle^{\mathbb F}.
\end{eqnarray*}
Then by compensating under $\mathbb F$ on  both sides of the above equality, we obtain
\begin{eqnarray*}
(1-Z_{-})\is A^{(1)}=  \langle m,{\cal T}_1(M)\rangle^{\mathbb F}+ \lambda  {\widetilde\psi}\is \langle {\cal T}_1(M)\rangle^{\mathbb F}=\left(\beta^{(1,m)}+ \lambda  {\widetilde\psi}\right)\is \langle {\cal T}_1(M),  {\cal T}_1(M)\rangle^{\mathbb F}.
\end{eqnarray*}
and hence we conclude that
\begin{eqnarray*}
A^{(1)}=I_{\{Z_{-}<1\}}\is A^{(1)}={{\beta^{(1,m)}+ \lambda  {\widetilde\psi }}\over{1 - Z_{-}}}I_{\{Z_{-}<1\}}\is \langle   {\cal T}_1(M)\rangle^{\mathbb F},
\end{eqnarray*}
and $   I_{\{ Z_{-}<1\}} (\beta^{(1,m)}+ \lambda  {\widetilde\psi })/(1-Z_{-}) \in L^2_{loc}( {\cal T}_1(M),\mathbb F)$ which is due to the $\mathbb F$-local boundedness of $(1-Z_{-})^{-1}I_{\{ Z_{-}<1\}}$. This proves that $ (S^{(1)}, \mathbb F)$ satisfies SC, as well as the second equality in (\ref{G2Fafter}) since $\lambda=\ ^{p,\mathbb F}(\widetilde{\lambda}^{\mathbb G} (1-Z_{-})^{-1}I_{\Rbrack \tau, +\infty\Lbrack}) $. As a consequence, the proof of (a) $\Longrightarrow $ (b) is completed. This ends the proof of the theorem.
\end{proof}

\begin{remark}
It is important to mention that $(S^{(2)}, \mathbb F)$, where $S^{(2)}:= (1-Z_{-})\is S- I_{\{ \widetilde Z=1\}}\is [M^S, m]$, might not satisfy SC in general. In fact, when $(S^{(2)}, \mathbb F)$ satisfies SC, the model $(I_{\{Z_{-}=1\}}\is S^{(2)},\mathbb F)$ also satisfies SC. However, $I_{\{Z_{-}=1\}}\is S^{(2)}=-I_{\{ \widetilde Z=1=Z_{-}\}}\is [M^S, m]$ is a continuous process with finite variation. Thus, thanks to Lemma \ref{lem:predictableSCNUll}, this process should be null, which is equivalent to $m^c$ (the continuous local martingale part of $m$) being orthogonal to $I_{\{Z_{-}=1\}}\is M^S$. This latter fact might not be fulfilled in general.
\end{remark}

Now, we state the  two main theorems in this section. This answers partially the problem {\bf(Prob1)} and completely the problem {\textbf{(Prob2)}}.

\begin{theorem}\label{SCapresdefault} Suppose $\tau$ fulfills  (\ref{conditiononTau}), $S$ satisfies SC$(\mathbb F)$ with market's price of risk $\widetilde{\lambda}^{\mathbb F}$, and
\begin{equation}\label{CSaftertaucondition}
\{\Delta M^S\neq 0\}\cap \{\widetilde Z=1 >Z_{-}\}  =\emptyset,\ \mbox{and}\quad I_{\{\widetilde\psi>0\}}\left(\widetilde{\lambda}^{\mathbb F}(1-Z_{-})-\beta^{(1,m)}\right)/\sqrt{\widetilde\psi} \in L^2_{loc}({\cal T}_1(M),\mathbb F).
\end{equation}
Then $(S-S^{\tau}, \mathbb G)$ satisfies SC, and its market's prices of risk $\widetilde{\lambda}^{\mathbb G}$ is given by
\begin{eqnarray}\label{marketpriceafterTau}
\widetilde{\lambda}^{\mathbb G}={{\widetilde{\lambda}^{\mathbb F}(1-Z_{-})-\beta^{(1,m)}}\over{{\widetilde\psi}(1-Z_{-})^2}} I_{\Rbrack\tau,+\infty\Lbrack},\end{eqnarray}
where $\beta^{(1,m)}$ is given by (\ref{F2Gafter}).
\end{theorem}

\begin{proof}
Let $N$ be an $\mathbb F$-local martingale.  Then we calculate
\begin{eqnarray*}
&&[{\cal T}_a(N),\widehat{M}^{(a)} ]= [{\cal T}_a(N),M- M^{\tau}]+{\mathbb G}\mbox{-local martingale}\\
&&={{1-Z_{-}}\over{1-\widetilde Z}}I_{\Rbrack\tau,+\infty\Lbrack}\is [N,M]+ 
 \frac{I_{\Rbrack\tau,+\infty\Lbrack}}{1-Z_{-}} \is \left[  \left(  I_{\{\widetilde{Z} =1\}} \is [N,m] \right)^{p,\mathbb F}  ,M \right] +{\mathbb G}\mbox{-local martingale}\\
&&={{1-Z_{-}}\over{1-\widetilde Z}}I_{\Rbrack\tau,+\infty\Lbrack}\is [N,M] + 
 I_{\Rbrack\tau,+\infty\Lbrack}  \Delta M  \is  \left(  \sum I_{\{\widetilde{Z} = 1>Z_{-}\}} \Delta N\right)^{p,\mathbb F}+{\mathbb G}\mbox{-local martingale}.
\end{eqnarray*}
Then, we derive 
\begin{eqnarray}\label{NMunderGafter}
\langle{\cal T}_a(N),\widehat{M}^{(a)}\rangle^{\mathbb G}&&= I_{\Rbrack\tau,+\infty\Lbrack}\is  \left( I_{\{\widetilde{Z} <1\}} \is [N,M]\right)^{p,\mathbb F}\nonumber \\
&&+\ ^{p,\mathbb F} \left(\Delta MI_{\{\widetilde{Z}<1\}} \right)(1-Z_{-})^{-1}I_{\Rbrack\tau,+\infty\Lbrack} \is  \left(  \sum I_{\{\widetilde{Z} = 1>Z_{-}\}} \Delta N\right)^{p,\mathbb F}.\hskip 1cm
\end{eqnarray}
Under the first condition in (\ref{CSaftertaucondition}),  equation (\ref{NMunderGafter})  can be further simplified as 
\begin{eqnarray}\label{NMunderGSimpleafter}
&&\langle{\cal T}_a(N),\widehat{M}^{(a)}\rangle^{\mathbb G}=  I_{\Rbrack\tau,+\infty\Lbrack}\is \langle N, M \rangle^{\mathbb F}.
\end{eqnarray}

By choosing  $N=M$ and $N=m$ respectively in (\ref{NMunderGSimpleafter}), we obtain 
 \begin{eqnarray}\label{Mm/MMunderGafter}
\langle{\cal T}_a(m),\widehat{M}^{(a)}\rangle^{\mathbb G}=I_{\Rbrack\tau,+\infty\Lbrack} \is  \langle m,M\rangle^{\mathbb F},\quad  \mbox{and} \quad   \langle{\cal T}_a(M),\widehat{M}^{(a)}\rangle^{\mathbb G}=I_{\Rbrack\tau,+\infty\Lbrack}\is\langle M, M \rangle^{\mathbb F} .\end{eqnarray}

Thus, under the assumption  that $S$ satisfies the SC$(\mathbb F)$ with its market's price $\widetilde{\lambda}^{\mathbb F}$, we get 
\begin{eqnarray*}
&&I_{\Rbrack\tau,+\infty\Lbrack}\is S=I_{\Rbrack\tau,+\infty\Lbrack}\is M+I_{\Rbrack\tau,+\infty\Lbrack}\is A=I_{\Rbrack\tau,+\infty\Lbrack}\is M+\widetilde{\lambda}^{\mathbb F}I_{\Rbrack\tau,+\infty\Lbrack}\is\langle M\rangle^{\mathbb F}\\
&&={\widehat M}^{(a)}   - (1-Z_{-})^{-1}I_{\Rbrack\tau,+\infty\Lbrack}\is  \langle m,M\rangle^{\mathbb F}+\widetilde{\lambda}^{\mathbb F}I_{\Rbrack\tau,+\infty\Lbrack}\is\langle M\rangle^{\mathbb F}\\
&&= {\widehat M}^{(a)}  - (1-Z_{-})^{-1}  I_{\Rbrack\tau,+\infty\Lbrack}\is  \langle {\cal T}_a(m),{\widehat M}^{(a)}\rangle^{\mathbb G}+ \widetilde{\lambda}^{\mathbb F}I_{\Rbrack\tau,+\infty\Lbrack}\is\langle{\cal T}_a(M), {\widehat M}^{(a)}\rangle^{\mathbb G} .\\
&&={\widehat M}^{(a)} +(\widetilde{\lambda}^{\mathbb F}-  (1-Z_{-})^{-1} \beta^{(1,m)})\is I_{\Rbrack\tau,+\infty\Lbrack}\is\langle{\cal T}_a(M), {\widehat M}^{(a)}\rangle^{\mathbb G} ,\\
&&={\widehat M}^{(a)} +{{\widetilde{\lambda}^{\mathbb F}(1-Z_{-})-\beta^{(1,m)}}\over{{\widetilde\psi}(1-Z_{-})^2}} I_{\Rbrack\tau,+\infty\Lbrack}\is\langle{\widehat M}^{(a)}\rangle^{\mathbb G}.
\end{eqnarray*}
Thanks to Lemma \ref{lemmecrucialapresdefault}-(d), It is clear that
\begin{eqnarray*}
{{\widetilde{\lambda}^{\mathbb F}(1-Z_{-})-\beta^{(1,m)}}\over{{\widetilde\psi}(1-Z_{-})^2}} I_{\Rbrack\tau,+\infty\Lbrack}\in L^2_{loc}({\widehat M}^{(a)}, \mathbb G)\quad\mbox{iff}\quad I_{\{\widetilde\psi>0\}}{{\widetilde{\lambda}^{\mathbb F}(1-Z_{-})-\beta^{(1,m)}}\over{\sqrt{\widetilde\psi}}} \in L^2_{loc}({\cal T}_1(M),\mathbb F).
\end{eqnarray*}
This implies that $(S - S^{\tau},\mathbb G)$ satisfies SC and its market's price  is given by (\ref{marketpriceafterTau}), and hence the proof of the theorem is completed.
 \end{proof}

The last main result of this section gives necessary and sufficient conditions on $\tau$ that guarantee the preservation of SC for the part-after-$\tau$ for any model who posses this feature under $\mathbb F$.

\begin{theorem}\label{SCapresdefault2}
  Suppose that  $\tau$ satisfies (\ref{conditiononTau}), and for any $N\in {\cal M}_{0,loc}^2(\mathbb F)$, we put
  \begin{eqnarray}\label{psiNN}
  \psi_N:={{d\left((1-\widetilde Z)\is [N,N]\right)^{p,\mathbb F}}\over{d\langle N\rangle^{\mathbb F}}}=1-\varphi_N,
  \end{eqnarray}
 where $\varphi_N$ is defined in (\ref{varphiN}). Then the following assertions are equivalent.\\
  {\rm{(a)}} $\{\widetilde{Z} =1 > Z_{-}\}$ is evanescent and $(\psi_{N})^{-1}I_{\{\psi_N>0\}}$ is locally bounded, for any $N\in {\cal M}_{0,loc}^2(\mathbb F)$.\\
  {\rm{(b)}} $(X-X^\tau, \mathbb G)$ satisfies SC for any $(X, \mathbb F)$ satisfying SC.
\end{theorem}
\begin{proof} The proof of (a) $\Longrightarrow$ (b) is a direct consequence of Theorem \ref{SCapresdefault}. To prove the converse, we assume that assertion (b) holds, and mimic the proof of Theorem \ref{theo:SCNULL}. For the sake of completeness, we give the full details.  Since $\{\widetilde{Z} =1\}\cap \{Z_{-}<1\} \subset \{\Delta m \neq 0\}$, it is a thin set. Let  $T$ be any stopping time such that $\Lbrack  T \Rbrack \subset \{\widetilde{Z} =1\}\cap \{Z_{-}<1\} $.   Then, consider the following  $\mathbb F$-martingale,
\begin{eqnarray} M=V-\widetilde V\in {\cal M}_{0}(\mathbb F),\ \mbox{where} \ \ V:=I_{\Lbrack T,+\infty\Lbrack}\ \ \mbox{and} \  \ \widetilde V:=(V)^{p,\mathbb F}.
 \end{eqnarray}
Since $\{t>\tau\} \subset\{\widetilde{Z}_t <1\}$ (see Jeulin \cite{Jeu} or Choulli et al. \cite{Choulli2013}) and $\widetilde{Z}_T =1$ on $\{T<+\infty\}$, we deduce that $\tau \geq T, \ P-a.s.$, and
\begin{equation}\label{predictableMtauafter}
M - M^{\tau}= - I_{\Rbrack \tau, +\infty \Lbrack}\is \widetilde V
\end{equation}
satisfies SC($\mathbb G$). By combining this with Lemma \ref{lem:predictableSCNUll}, we conclude that $M - M^\tau$ must be null (i.e. $ I_{\Rbrack \tau, +\infty \Lbrack}\is \widetilde V=0$). Thus, we get
\begin{eqnarray}
0=E\left( I_{\Rbrack \tau, +\infty \Lbrack}\is \widetilde V_{\infty}\right)=E\left(\int_1^{+\infty}(1-  Z_{s-} )d\widetilde V_s\right)=E\left( (1 - Z_{T-}) I_{\{ T<+\infty\}}\right),
  \end{eqnarray}
  or equivalently $ (1 - Z_{T-}) I_{\{ T<+\infty\}}=0$ which implies that  $T=+\infty, \ P-a.s.$.   Therefore, the thin set $\{\widetilde{Z} =1\}\cap \{Z_{-}<1\}$ is evanescent (see Proposition 2.18 on Page 20 in \cite{JSlimitbook}).  Thus, as a result, for any $N\in {\cal M}_{0,loc}^2(\mathbb F)$, we deduce that  
\begin{eqnarray}\label{BracketNGFafter}
\langle {\widehat N}^{(a)}\rangle^{\mathbb F}={{I_{\Rbrack \tau, +\infty \Lbrack}}\over{1-Z_{-}}}\is \left((1-{\widetilde Z})\is [N,N]\right)^{p,\mathbb F}={{\psi_N I_{\Rbrack \tau, +\infty \Lbrack}}\over{1-Z_{-}}}\is \langle N\rangle^{\mathbb F}.
\end{eqnarray}
Therefore, the rest of this proof focus on  the second statement of assertion (a), we consider $N\in {\cal M}^2(\mathbb F)$, and defined the following processes
\begin{eqnarray*}
X:=I_{\{Z_{-}<1\}}\is N+{{\theta+\beta^{(N,m)}}\over{1-Z_{-}}}I_{\{Z_{-}<1\}}\is \langle N\rangle^{\mathbb F},\quad\mbox{where}\quad  \beta^{(N,m)}:={{d\langle m, N\rangle^{\mathbb F}}\over{ d\langle N\rangle^{\mathbb F}}},\quad \theta\in L^2_{loc}(N,\mathbb F).
\end{eqnarray*}
Then it is clear that $(X,\mathbb F)$ satisfies SC, and hence $(X^{\tau},\mathbb G)$ satisfies SC also. As a result, since $\Rbrack \tau, +\infty \Lbrack\subset\{Z_{-}<1\}$, we obtain
\begin{eqnarray*}
X^{\tau}&&= {\widehat N}^{(a)}-(1-Z_{-})^{-1}I_{\Rbrack \tau, +\infty \Lbrack}\is\langle m, N\rangle^{\mathbb F}+ {{\theta+\beta^{(N,m)}}\over{1-Z_{-}}}I_{\Rbrack\tau,+\infty\Lbrack}\is \langle N\rangle^{\mathbb F}\\
&&={\widehat N}^{(a)}+ {{\theta}\over{1-Z_{-}}}I_{\Rbrack \tau, +\infty \Lbrack}\is \langle N\rangle^{\mathbb F}\\
&&={\widehat N}^{(a)}+ {{\theta}\over{\psi_N }}I_{\{ \psi_N >0\}}\is \langle{\widehat N}^{(b)}\rangle^{\mathbb G}.
\end{eqnarray*}
The last equality follows from (\ref{BracketNGF}). Hence, we conclude that ${{\theta}\over{\psi_N }}I_{\{ \psi_N >0\}}\in L^2_{loc}({\widehat N}^{(b)}, \mathbb G)$ for any $\theta\in L^2_{loc}(N, \mathbb F)$. Thus, by combining (\ref{BracketNGF}), Proposition \ref{G/Flocalization} -(c) and \cite[Chapter VIII 10-11]{dm2} (Lenglart's result that claims that every predictable process $H$, such that $\sup_{0\leq s\leq \cdot}\vert H_s\vert$ has a finite variation, is locally bounded), the claim ${{\theta}\over{\psi_N }}I_{\{ \psi_N >0\}}\in L^2_{loc}({\widehat N}^{(b)}, \mathbb G)$ for any $\theta\in L^2_{loc}(N, \mathbb F)$ is equivalent to
\begin{eqnarray*}
\left({{\vert\lambda\vert}\over{\psi_N}}I_{\{\psi_N>0,\ Z_{-}<1\}}\is \langle N\rangle^{\mathbb F}\right)_T<+\infty,\quad P\mbox{-a.s}.,\end{eqnarray*}
for any $\mathbb F$-predictable process $\lambda $ such that $\left(\vert\lambda\vert\is \langle N\rangle^{\mathbb F}\right)_T<+\infty,\quad P\mbox{-a.s}.$. A combination of this with \cite[Theorem 2.7]{Rudin}, we deduce that $P$-almost all $\omega\in\Omega$, the function ${1\over{\varphi_N(\omega, s)}}I_{\{\psi_N>0,\ Z_{s-}(\omega)<1\}},\ s\in [0,T]$ belongs to the dual of $L^1([0,T], d\langle N\rangle^{\mathbb F}_s(\omega))$ (i.e. $L^{\infty}([0,T], d\langle N\rangle^{\mathbb F}_s(\omega))$). In fact, it is enough to apply \cite[Theorem 2.7]{Rudin} to $$\Lambda_n(\lambda):=\int_0^T {{\lambda(s)}\over{\psi_N(\omega, s)}}I_{\{Z_{s-}(\omega)<1,\ \psi_N(\omega, s)\geq n^{-1}\}}d\langle N\rangle^{\mathbb F}_s(\omega),\quad n\geq 1,$$ which converges to $\int_0^T {{\lambda(s)}\over{\psi_N(\omega, s)}}I_{\{Z_{s-}(\omega)<1,\ \psi_N(\omega, s)>0\}}d\langle N\rangle^{\mathbb F}_s(\omega) $ for any $\lambda\in L^1([0,T], d\langle N\rangle^{\mathbb F}_s(\omega))$.   Therefore, $P$-almost all $\omega\in\Omega$, there exists $C(\omega)\in (0,+\infty)$ such that 
 \begin{eqnarray*}
{1\over{\psi_N(\omega, s)}}I_{\{\psi_N(\omega,s)>0,\ Z_{s-}(\omega)<1\}}\leq C(\omega),\quad \forall\ s\in ]0,T].
\end{eqnarray*}
This proves that $\displaystyle \sup_{0\leq s\leq \cdot} (\psi_N(s))^{-1}I_{\{Z_{s-}<1,\ \psi_N(s)>0\}}$ $\mathbb F$-predictable with finite variation. Hence, it is locally bounded due to Lenglart's result in \cite[Chapter VIII 10-11]{dm2}.This proves assertion (a), and completes the proof of the theorem.
\end{proof}

\begin{remark}
{\rm{(a)}} When $m$ is continuous, we get $\widetilde Z=Z_{-}$ and $\psi_N=1-Z_{-}$ for any $N\in {\cal M}_{0,loc}^2(\mathbb F)$. Hence in this case, $(X-X^{\tau}, \mathbb G)$ always satisfies SC whenever $(X,\mathbb F)$ does.\\
{\rm{(b)}} It is clearly that the spirit of Example \ref{CS96}  appears naturally in Theorem {SCapresdefault2}. \end{remark}




\bigskip \bigskip

{\bf Acknowledgements:} An important part of  this research was achieved at University of Alberta. The research of T. Choulli  is supported financially by the
Natural Sciences and Engineering Research Council of Canada (Grant G121210818). The research of Jun Deng is supported by the National Natural Science Foundation of China (11501105) and UIBE Excellent Young Research Funding (302/871703).

\appendix 

\section{Useful results}

\begin{proposition}\label{G/Flocalization} The following assertions hold.\\
  {\rm{(a)}} There exists a sequence of $\mathbb G$-predictable stopping times, $(\tau_n)_{n\geq 1}$, that increases to $+\infty$ and 
  \begin{eqnarray}
  \Rbrack0,\tau_n\wedge\tau\Rbrack\subset \{Z_{-}\geq n^{-1}\}.
  \end{eqnarray}
  {\rm{(b)}} Let $(\sigma^{\mathbb G}_n)_n$ be a sequence of $\mathbb G$-predictable stopping times that increases to $+\infty$.
 Then, there exists a non-decreasing sequence of $\mathbb F$-predictable stopping times, $(\sigma^{\mathbb F}_n)_{n\geq 1}$,
 satisfying the following properties
  \begin{eqnarray}
  \sigma^{\mathbb G}_n\wedge\tau= \sigma^{\mathbb F}_n\wedge\tau,\ \ \ \sigma_{\infty}^{\mathbb F}:=\sup_{n} \sigma^{\mathbb F}_n\geq  R\ \ P-a.s.,\label{G/Fstoppingbeforetau1} \\
  \ \ \mbox{and }\ \ \ \ \ \ \ \ Z_{\sigma_{\infty}^{\mathbb F}-}=0\ \ \ P-a.s.\ \ \ \ \mbox{on}\ \ \ \ \Sigma\cap(\sigma_{\infty}^{\mathbb F}<+\infty),\label{G/Fstoppingbeforetau2}
  \end{eqnarray}
  where $\Sigma:=\displaystyle\bigcap_{n\geq 1}(\sigma_n^{\mathbb F}<\sigma_{\infty}^{\mathbb F})$ and $R:=\inf\{t\geq 0\ |\ Z_t=0\}$.\\
   {\rm{(c)}} Let $V$ be an $\mathbb F$-predictable and non-decreasing process with values in $[0,+\infty]$.
   Then, $V^{\tau}\in{\cal A}_{loc}^+(\mathbb G)$ if and only if $I_{\{ Z_{-}\geq\delta\}}
   \is V\in {\cal A}_{loc}^+(\mathbb F)$ for any $\delta>0$.\\
\end{proposition}


\begin{lemma}[\cite{Choulli2013}]\label{lemmecrucial}
The following assertions hold.\\
{\rm{(b)}}  For any $\mathbb F$-adapted process $V$  with locally integrable
variation, we have
\begin{equation}\label{GcompensatorofVbeforetaugeneral}
\comg{V^{\tau}} =(Z_{-})^{-1}I_{\Lbrack 0,\tau\Rbrack}\is\bigl(\widetilde Z \is V\bigr)^{p,\mathbb F}  . \end{equation}
{\rm{(b)}}  For any $\mathbb F$-local martingale  $M$, we have, on $\Lbrack 0,\tau\Rbrack$
\begin{equation}\label{ZZtildeinaccessiblejumps}
\prog{\frac{\Delta M}{\widetilde Z}}={{ \prof{{\Delta M}I_{\{\widetilde Z>0\}}}}\over{Z_{-}}},\ \ \mbox{and}\ \ \prog{\frac{1}{\widetilde Z}}={{\prof{I_{\{\widetilde Z>0\}}}}\over{Z_{-}}}.
\end{equation}
\end{lemma}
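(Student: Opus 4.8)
The statement to prove is Lemma~\ref{lemmecrucial}, which is attributed to Choulli et al.~\cite{Choulli2013} and consists of two identities: (a) a formula for the $\mathbb{G}$-dual predictable projection of $V^\tau$ for an $\mathbb{F}$-adapted process of locally integrable variation, and (b) two identities for $\mathbb{G}$-predictable projections on $\Lbrack 0,\tau\Rbrack$. Since this is cited as a known result, the natural plan is to reconstruct the proof from the fundamental objects introduced in the excerpt, namely $Z$, $\widetilde Z$, $m = Z + D^{o,\mathbb{F}}$, and the classical fact that for $\mathbb{G}$-optional processes on $\Lbrack 0,\tau\Rbrack$ one can reduce computations to $\mathbb{F}$ via the Az\'ema supermartingale.

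For part~(a), the plan is as follows. First I would recall the key projection formula: for a bounded $\mathbb{G}$-predictable process $H^{\mathbb{G}}$, there exists a bounded $\mathbb{F}$-predictable process $H^{\mathbb{F}}$ with $H^{\mathbb{G}} I_{\Lbrack 0,\tau\Rbrack} = H^{\mathbb{F}} I_{\Lbrack 0,\tau\Rbrack}$, and moreover $\EE(H^{\mathbb{F}}_\tau) = \EE\bigl(\int H^{\mathbb{F}}_s \widetilde Z_s \, dD^{o,\mathbb{F}}_s\bigr)$ — equivalently the $\mathbb{F}$-dual predictable projection of $D^\tau = D$ restricted appropriately involves $\widetilde Z$. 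The cleanest route is: for any bounded $\mathbb{G}$-predictable $H^{\mathbb{G}}$,
\begin{eqnarray*}
\EE\left(H^{\mathbb{G}}\centerdot V^\tau_\infty\right) = \EE\left(\int_0^\infty H^{\mathbb{F}}_s I_{\{s\leq\tau\}}\, dV_s\right) = \EE\left(\int_0^\infty H^{\mathbb{F}}_s \,{}^{p,\mathbb{F}}\!\left(I_{\{s\leq\tau\}}\right) dV_s\right) = \EE\left(\int_0^\infty H^{\mathbb{F}}_s \widetilde Z_s\, dV_s\right),
\end{eqnarray*}
using that ${}^{p,\mathbb{F}}(I_{\Lbrack 0,\tau\Rbrack}) = \widetilde Z$ and that $V$ is $\mathbb{F}$-predictable so the predictable projection can be moved onto the indicator. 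Then I would rewrite $\widetilde Z \centerdot V = {}^{?}$ in terms of its $\mathbb{F}$-dual predictable projection $(\widetilde Z\centerdot V)^{p,\mathbb{F}}$, and finally pass back to $\mathbb{G}$: since $H^{\mathbb{G}} I_{\Lbrack 0,\tau\Rbrack}$ agrees with $H^{\mathbb{F}} I_{\Lbrack 0,\tau\Rbrack}$ and $(Z_{-})^{-1} I_{\Lbrack 0,\tau\Rbrack}$ is the density converting an $\mathbb{F}$-predictable integrator into a $\mathbb{G}$-predictable one on $\Lbrack 0,\tau\Rbrack$, I would conclude $\comg{V^\tau} = (Z_{-})^{-1} I_{\Lbrack 0,\tau\Rbrack}\centerdot(\widetilde Z\centerdot V)^{p,\mathbb{F}}$ by the characterization of the $\mathbb{G}$-dual predictable projection via the duality $\EE(H^{\mathbb{G}}\centerdot V^\tau_\infty) = \EE(H^{\mathbb{G}}\centerdot \comg{V^\tau}_\infty)$.

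For part~(b), both identities are special cases of a reduction-to-$\mathbb{F}$ argument applied to $\mathbb{G}$-optional (not merely predictable) jump processes. For the first identity, I would take $M$ an $\mathbb{F}$-local martingale, note that $\Delta M/\widetilde Z$ on $\Lbrack 0,\tau\Rbrack$ is well-defined (since $\widetilde Z > 0$ on $\Lbrack 0,\tau\Rbrack$, as $\widetilde Z_\tau = {}^{p,\mathbb{F}}(I_{\Lbrack 0,\tau\Rbrack})_\tau > 0$), and compute its $\mathbb{G}$-predictable projection by testing against bounded $\mathbb{G}$-predictable $H^{\mathbb{G}}$ and reducing to $\mathbb{F}$ as above; the factor $I_{\{\widetilde Z > 0\}}$ appears because outside that set the $\mathbb{F}$-predictable projection of $\Delta M$ gets killed, and the $(Z_{-})^{-1}$ again is the change-of-integrator density. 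The second identity is the same computation with $M$ replaced by a suitable martingale so that $\Delta M$ is replaced by the constant $1$ — or more directly it follows from the first by a limiting/approximation argument. I would present part~(b) as a corollary of the projection formula ${}^{p,\mathbb{G}}(H I_{\Lbrack 0,\tau\Rbrack}) = (Z_{-})^{-1}\, {}^{p,\mathbb{F}}(H\widetilde Z I_{\{\widetilde Z>0\}})$ valid for $\mathbb{F}$-optional $H$, which itself is proved by the same testing argument.

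The main obstacle is the careful handling of the \emph{optional} (as opposed to predictable) nature of the jump processes in part~(b) and the associated measurability/integrability bookkeeping: one must justify that ${}^{p,\mathbb{F}}(I_{\Lbrack 0,\tau\Rbrack}) = \widetilde Z$ rather than $Z$ (the strict-versus-nonstrict inequality in the definition of $\widetilde Z$ is exactly what makes the predictable projection land on $\widetilde Z$), that $\widetilde Z_- = Z_-$ so the denominator $Z_{-}$ is legitimate, and that all the dual projections exist — i.e. that $\widetilde Z\centerdot V$ has $\mathbb{F}$-locally integrable variation, which follows from $0\le\widetilde Z\le 1$ plus the hypothesis on $V$. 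Since the result is quoted verbatim from \cite{Choulli2013}, in the paper itself I would simply cite it; the reconstruction above is the argument I would give if a self-contained proof were required.
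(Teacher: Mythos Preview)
The paper does not give its own proof of this lemma; it is simply cited from \cite{Choulli2013}. Your plan to reconstruct the argument via the duality characterization of dual predictable projections, together with the Jeulin-type reduction of $\mathbb G$-predictable processes on $\Lbrack 0,\tau\Rbrack$ to $\mathbb F$-predictable ones, is the standard route and is essentially correct.

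There is, however, one genuine technical slip in your part~(a). You write that $V$ is $\mathbb F$-predictable and that ${}^{p,\mathbb F}(I_{\Lbrack 0,\tau\Rbrack})=\widetilde Z$. Both claims are wrong: the hypothesis only gives $V$ $\mathbb F$-\emph{adapted}, and in fact ${}^{p,\mathbb F}(I_{\Lbrack 0,\tau\Rbrack})=Z_{-}$ (since $\widetilde Z=Z_{-}+\Delta m$ and ${}^{p,\mathbb F}(\Delta m)=0$), while it is the \emph{optional} projection that equals $\widetilde Z$, as stated after \eqref{ZandZtilde}. Fortunately these two errors cancel: because $V$ is merely adapted, the random measure $H^{\mathbb F}\,dV$ is $\mathbb F$-optional, so the correct step is
\[
\E\!\left(\int_0^\infty H^{\mathbb F}_s I_{\{s\le\tau\}}\,dV_s\right)
=\E\!\left(\int_0^\infty H^{\mathbb F}_s\;{}^{o,\mathbb F}\!\bigl(I_{\Lbrack 0,\tau\Rbrack}\bigr)_s\,dV_s\right)
=\E\!\left(\int_0^\infty H^{\mathbb F}_s\,\widetilde Z_s\,dV_s\right),
\]
and the rest of your chain goes through unchanged. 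The closing step---passing back to $\mathbb G$ via the factor $(Z_{-})^{-1}I_{\Lbrack 0,\tau\Rbrack}$---uses precisely that ${}^{p,\mathbb F}(I_{\Lbrack 0,\tau\Rbrack})=Z_{-}$, together with the fact that the $\mathbb F$-predictable measure $d(\widetilde Z\centerdot V)^{p,\mathbb F}$ does not charge $\{Z_{-}=0\}$; you should make that explicit. For part~(b) your outlined projection identity ${}^{p,\mathbb G}(K)I_{\Lbrack 0,\tau\Rbrack}=(Z_{-})^{-1}\,{}^{p,\mathbb F}(K\widetilde Z)\,I_{\Lbrack 0,\tau\Rbrack}$ for $\mathbb F$-optional $K$ is the right tool, applied to $K=\Delta M/\widetilde Z\cdot I_{\{\widetilde Z>0\}}$ and $K=\widetilde Z^{-1}I_{\{\widetilde Z>0\}}$.
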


\begin{lemma}\label{lem:vb}
The following process
 \begin{equation}\label{eq:pfzlocalbounded}
  V^{(b)} := \left(^{p,\mathbb{F}}\left( I_{\{\widetilde{Z} >0\}} \right)\right)^{-1} I_{\Lbrack 0, \tau \Rbrack}
 \end{equation}
 is $\mathbb{G}$-predictable and locally bounded.
\end{lemma}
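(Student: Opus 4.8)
Write $V^{(b)}=g^{-1}I_{\Lbrack 0,\tau\Rbrack}$ with $g:={}^{p,\mathbb F}\!\left(I_{\{\widetilde Z>0\}}\right)$. The plan is to establish the single pointwise inequality $g\ge Z_{-}$ up to evanescence, which does almost all of the work: it shows that $g>0$ on $\Lbrack 0,\tau\Rbrack$ (recall $Z_{-}>0$ there), hence $V^{(b)}$ is a well-defined finite-valued process, and moreover $0\le V^{(b)}\le Z_{-}^{-1}I_{\Lbrack 0,\tau\Rbrack}$; the $\mathbb G$-local boundedness of $V^{(b)}$ then follows from the classical $\mathbb G$-local boundedness of $Z_{-}^{-1}I_{\Lbrack 0,\tau\Rbrack}$ (already invoked in the proof of Proposition~\ref{lemRootWidZinM2}). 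The $\mathbb G$-predictability is handled separately and is routine.

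To obtain $g\ge Z_{-}$, I would first invoke the two classical facts of progressive enlargement that $\Lbrack 0,\tau\Rbrack\subset\{\widetilde Z>0\}$ and $\Lbrack 0,\tau\Rbrack\subset\{Z_{-}>0\}$ up to evanescence (Jeulin \cite{Jeu}; the first also follows from the optional section theorem, since the support of a non-negative optional process is contained up to evanescence in that of its optional projection, and $\widetilde Z$ is the $\mathbb F$-optional projection of $I_{\Lbrack 0,\tau\Rbrack}$). From the first inclusion, $I_{\{\widetilde Z>0\}}\ge I_{\Lbrack 0,\tau\Rbrack}$ off an evanescent set, and since the $\mathbb F$-predictable projection is order-preserving on non-negative processes, $g\ge{}^{p,\mathbb F}\!\left(I_{\Lbrack 0,\tau\Rbrack}\right)$. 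Finally, using once more that $\widetilde Z$ is the $\mathbb F$-optional projection of $I_{\Lbrack 0,\tau\Rbrack}$ and the identity $\widetilde Z=Z_{-}+\Delta m$ recalled in the excerpt, together with ${}^{p,\mathbb F}(Z_{-})=Z_{-}$ ($Z_{-}$ being $\mathbb F$-predictable) and ${}^{p,\mathbb F}(\Delta m)=0$ (a martingale has no predictable jump), I get ${}^{p,\mathbb F}\!\left(I_{\Lbrack 0,\tau\Rbrack}\right)={}^{p,\mathbb F}(\widetilde Z)={}^{p,\mathbb F}(Z_{-}+\Delta m)=Z_{-}$. Hence $g\ge Z_{-}$ everywhere, and in particular $g>0$ on $\Lbrack 0,\tau\Rbrack$.

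For $\mathbb G$-predictability, note that $I_{\Lbrack 0,\tau\Rbrack}$ has left-continuous paths and is $\mathbb G$-adapted (because $\tau$ is a $\mathbb G$-stopping time), hence $\mathbb G$-predictable, while $\left(g\,I_{\{g>0\}}\right)^{-1}$ is $\mathbb F$-predictable, a fortiori $\mathbb G$-predictable; since $\left(g\,I_{\{g>0\}}\right)^{-1}I_{\Lbrack 0,\tau\Rbrack}=g^{-1}I_{\Lbrack 0,\tau\Rbrack}=V^{(b)}$ (the indicator $I_{\{g>0\}}$ is superfluous on $\Lbrack 0,\tau\Rbrack$ by the previous paragraph), $V^{(b)}$ is $\mathbb G$-predictable; combined with $0\le V^{(b)}\le Z_{-}^{-1}I_{\Lbrack 0,\tau\Rbrack}$, this finishes the argument. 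The only non-routine ingredients are the identification ${}^{p,\mathbb F}\!\left(I_{\Lbrack 0,\tau\Rbrack}\right)=Z_{-}$ and the inclusion $\Lbrack 0,\tau\Rbrack\subset\{\widetilde Z>0\}$; the mild subtlety about the meaning of $g^{-1}$ off $\Lbrack 0,\tau\Rbrack$ (where $g$ may vanish) is exactly what the bound $g\ge Z_{-}>0$ on $\Lbrack 0,\tau\Rbrack$ disposes of.
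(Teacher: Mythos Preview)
Your proof is correct and follows essentially the same approach as the paper: both derive the key inequality $g={}^{p,\mathbb F}(I_{\{\widetilde Z>0\}})\ge Z_{-}$ and then conclude from the $\mathbb G$-local boundedness of $Z_{-}^{-1}I_{\Lbrack 0,\tau\Rbrack}$. The only difference is that the paper obtains $g\ge Z_{-}$ more directly from the pointwise bound $\widetilde Z\le I_{\{\widetilde Z>0\}}$ (since $0\le\widetilde Z\le 1$) followed by ${}^{p,\mathbb F}(\widetilde Z)=Z_{-}$, bypassing your detour through the inclusion $\Lbrack 0,\tau\Rbrack\subset\{\widetilde Z>0\}$ and the identification ${}^{p,\mathbb F}(I_{\Lbrack 0,\tau\Rbrack})={}^{p,\mathbb F}(\widetilde Z)$.
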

\begin{proof}
 It is enough to notice that $\widetilde{Z} \leq  I_{\{\widetilde{Z} >0\}} $ and the process $(Z_{-})^{-1} I_{\Lbrack 0, \tau \Rbrack} $ is $\mathbb{G}$-locally bounded.
\end{proof}

\begin{lemma}\label{lemmecrucialapresdefault} Suppose that $\tau$ is a honest time.  Then the following assertions hold.\\
{\rm{(a)}}If $Z_\tau I_{\{\tau<+\infty\}} <1$ $P$-a.s., then $({1-Z_{-}})^{-1}I_{\Rbrack\tau,+\infty\Lbrack}$ is a $\mathbb G$-locally bounded and predictable process.\\
{\rm{(b)}} For any $\mathbb F$-adapted process with locally integrable variation, $V$, we have
\begin{eqnarray}\label{Gcompensatoraftertau}
  I_{\Rbrack \tau,+\infty\Lbrack}\is{ V}^{p,\mathbb G} &=&  I_{\Rbrack \tau,+\infty\Lbrack} \left({1-Z_{-}}\right)^{-1} \is \left((1- \widetilde{Z}) \is V\right)^{p,\mathbb F}.
 \end{eqnarray}
{\rm{(c)}}  Suppose $\tau$ is finite almost surely and $Z_\tau<1$ $P$-a.s.. Then, $I_{\Lbrack \tau,+\infty\Rbrack}\is V\in {\cal A}_{loc}(\mathbb G)$
 if and only if $(1-\widetilde Z)\is V\in {\cal A}_{loc}(\mathbb F)$.\\
 {\rm{(d)}}  Suppose $\tau$ is finite almost surely  and $Z_\tau<1$ $P$-a.s., and $V$ is a nondecreasing and  $\mathbb
F$-predictable process. Then, for any $\mathbb
F$-predictable process $\varphi\geq0$,\\ $\varphi I_{\Rbrack
\tau,+\infty\Lbrack}\is V\in {\cal A}^+_{loc}(\mathbb G)$ iff $(1-Z_{-})\varphi\is V\in {\cal
A}^+_{loc}(\mathbb F)$ iff $\varphi I_{\{Z_{-}<1\}}\is V\in {\cal
A}^+_{loc}(\mathbb F)$.
\end{lemma}

\begin{lemma}
  Suppose that $\tau$ is a honest time and $Z_\tau I_{\{\tau<+\infty\}} <1$ $P$-a.s.. Then the process
 \begin{equation}\label{eq:pf1-zlocalbounded}
  V^{(b)} :=  \left({^{p,\mathbb{F}}\left( I_{\{\widetilde{Z} <1\}} \right)}\right)^{-1} I_{\Rbrack \tau,+\infty \Lbrack}
 \end{equation}
 is $\mathbb{G}$-predictable and locally bounded.
\end{lemma}
\begin{proof}
 It is enough to notice that $1 -\widetilde{Z} \leq  I_{\{\widetilde{Z} <1\}} $ and the process $(1-Z_{-})^{-1}I_{\Rbrack \tau,+\infty\Lbrack} $ is $\mathbb{G}$-locally bounded.
\end{proof}






\begin{thebibliography}{1}
\bibitem{ACDM} Aksamit, A., Choulli, T., Deng, J. and Jeanblanc, M., 2017. No-arbitrage up to random horizon for quasi-left-continuous models. Finance and Stochastics, 21(4), pp.1103-1139.

\bibitem{ACDM2018-1} Aksamit, A., Choulli, T., Deng, J. and Jeanblanc, M., 2018. No-arbitrage under a class of honest times. Finance and Stochastics, 22(1), pp.127-159.

\bibitem{ACDM2018-2} Aksamit, A., Choulli, T., Deng, J. and Jeanblanc, M., 2018. No-arbitrage under additional information for thin semimartingale models. To appear in Stochastic Processes and their Applications.

\bibitem{aksamit/choulli/deng/jeanblanc} Aksamit, A., Choulli, T., Deng, J., and Jeanblanc, M.:\newblock{Arbitrages in a progressive enlargement setting }, \newblock{To appear in World Scientic Review: Proceeding for "Workshop on Stochastic Portfolio, Arbitrage Credit and Informational Risks", China}, 2014.

\bibitem{ACJ2015} Aksamit, A., Choulli, T., and Jeanblanc, M.:\newblock{On an Optional Semimartingale Decomposition and the Existence of a Deflator in an Enlarged Filtration}, \newblock{C. Donati-Martin et al. (eds.), In Memoriam Marc Yor - S\'eminaire de Probabilit\'es XLVII, Lecture Notes in Mathematics 2137}, pp: 187-218  (2015).

\bibitem{allen} Allen B.:  Information as an economic commodity. The American Economic Review, 80(2):268–273, 1990.

\bibitem{ais98} Amendinger, J., Imkeller, P., and  Schweizer, M.: Additional logarithmic utility of an insider. Stochastic processes and their applications, 75(2), 263-286, 1998

\bibitem{Ansel92} Ansel J.P. and Stricker, Ch.:  Lois de martingale, densit\'es et d\'ecomposition de F\"ollmer Schweizer, Annales de l'I.H.P. Probabilités et statistiques, Vol. 28, Issue: 3, page 375-392  (1992).

\bibitem{Ansel94} Ansel J.P. and Stricker, Ch.:  Couverture des actifs contingents et prix maximum, Annales de l'I.H.P. Probabilités et statistiques, Vol. 30, Issue: 2, page 303-315  (1994). 


\bibitem{arrow99} Arrow K. J. and  Chichilnisky G.:  Markets, Information and Uncertainty: Es-
says in Economic Theory in Honor of Kenneth J. Arrow. Cambridge University
Press, 1999.

\bibitem{arrow73} Arrow K. J.: Information and economic behavior. Technical report, DTIC Document, 1973.

\bibitem{arrow84} Arrow K. J.: The Econonmics of Information, volume 4. Harvard University
Press, 1984.

\bibitem{arrow99} Arrow K. J. and  Chichilnisky G.:  Markets, Information and Uncertainty: Es-
says in Economic Theory in Honor of Kenneth J. Arrow. Cambridge University
Press, 1999.

\bibitem{Barlow} Barlow M. T.,
\newblock{Study of a filtration expanded to include an honest time,}  \newblock{Z. Wahrscheinlichkeitstheorie verw, Gebiete, 44, 307-323}, 1978.

\bibitem{BremaudYor} Br\'{e}maud P. and Yor M.,
\newblock{Changes of filtrations and of probability measures,}  \newblock{Z. Wahrscheinlichkeitstheorie verw, Gebiete, 45, 269-295}, 1978.

\bibitem{biaginietal2000} Biagini, F., Guasoni, P., and  Pratelli, M. (2000). Mean‐Variance Hedging for Stochastic Volatility Models. Mathematical Finance, 10(2), 109-123.

\bibitem{ChoulliStricker}Choulli  T. and Stricker C., \newblock{Deux applications de la d\'{e}omposition de Galtchouk-Kunita-Watanabe,
} \newblock{S\'{e}inaire de Probabilit\'{e}s XXX}, 1996.

\bibitem{choullivandaeles2010} Choulli, T., Vandaele, N., and  Vanmaele, M. (2010). The F\"ollmer--Schweizer decomposition: comparison and description. Stochastic Processes and their Applications, 120(6), 853-872.

%



\bibitem{Choulli2013} Choulli, T., Aksamit A., Deng J., Jeanblanc M., Non-arbitrage up to Random Horizon and after Honest Times for Semimartingale Models, 2013.

\bibitem{cernykallsen2007} Cerny, A.,  and Kallsen, J. (2007). On the structure of general mean-variance hedging strategies. The Annals of probability, 35(4), 1479-1531.

\bibitem{duffine1991} Duffie, D., and  Richardson, H. R. (1991). Mean-variance hedging in continuous time. The Annals of Applied Probability, 1-15.

\bibitem{dm2} Dellacherie, C. and Meyer, P-A., Probabilit{\'e}s et Potentiel, chapitres V-VIII",Hermann, Paris, 1980,
English translation : Probabilities and  Potentiel, chapters V-VIII, North-Holland, (1982).


\bibitem{dm} Dellacherie, C. and Meyer, P-A.
\newblock{A propos du travail de Yor sur le grossissement de tribus},     S{\'e}minaire de Probabilit\'es XII,
 Lecture Notes in Mathematics, Vol. 649,69-78, 1978.

\bibitem{DMM} {Dellacherie, M., Maisonneuve, B. and  Meyer, P-A.} (1992),
ph{Probabilit\'es et Potentiel, chapitres XVII-XXIV: Processus de
Markov (fin), Compl\'ements de calcul stochastique}, Hermann, Paris.

\bibitem{dm2} Dellacherie, C. and Meyer, P-A., Probabilit{\'e}s et Potentiel, chapitres V-VIII",Hermann, Paris, 1980,
English translation : Probabilities and  Potentiel, chapters V-VIII, North-Holland, (1982).
%
%

\bibitem{delbaenScha1996} Delbaen, F., and  Schachermayer, W. (1996). The variance-optimal martingale measure for continuous processes. Bernoulli, 81-105.


\bibitem{fjs}Fontana, C. and Jeanblanc, M. and   Song, S. On arbitrages arising with honest times. Submitted.

\bibitem{healthplaten2001} Heath, D., Platen, E., and  Schweizer, M. (2001). A comparison of two quadratic approaches to hedging in incomplete markets. Mathematical Finance, 11(4), 385-413.


\bibitem{kohatsusulem06} Kohatsu‐Higa, A., and  Sulem, A.: Utility maximization in an insider influenced market. Mathematical Finance, 16(1), 153-179, 2006

\bibitem{Yanbook}He, S. W., Wang, C. K.,  Yan, J. A.: Semimartingale theory and stochastic calculus. CRC Press (1992).

\bibitem{Imk} Imkeller, P.: Random times at which insiders can have free lunches, {Stochastics and Stochastic Reports}, 74(1-2): 465--487 (2002)

\bibitem{Jacod} Jacod, J., \newblock{Calcul Stochastique et Probl\`eme de Martingales}, Lecture Notes in Mathematics, Vol. 714,(1979).

\bibitem{JSlimitbook}  Jacod, J. and   Shiryaev, A.N., \newblock{Limit theorems for stochastic Processes}, Springer Verlag, 2003

\bibitem{jacod1} Jacod, J., \newblock{Grossissment initial, hypothesis $H'$ et theorem de girsanov} In:
Grossissements de filtrations: exemples et applications, (eds. Jeulin, T. and Yor, M.),
Lecture Notes in Mathematics, 1118, Springer, Berlin (1980)


\bibitem{jeanblancschweizer2012} Jeanblanc, M., Mania, M., Santacroce, M., and Schweizer, M. (2012). Mean-variance hedging via stochastic control and BSDEs for general semimartingales. The Annals of Applied Probability, 22(6), 2388-2428.


\bibitem{Jeu} {Jeulin, T.} (1980), {Semi-martingales et Grossissement d'une Filtration}, Lecture Notes in Mathematics, vol. 833, Springer, Berlin - Heidelberg - New York.

\bibitem{laurentPham1999} Laurent, J. P., and  Pham, H. (1999). Dynamic programming and mean-variance hedging. Finance and stochastics, 3(1), 83-110.


\bibitem{Yor1978} Lazaro D.S., Yor M.: Sous-espaces denses dans $L^1$ ou $H^1$ et repr\'{e}sentation des martingales. S\'{e}minaire de Probabilit\'{e}s XII. Springer Berlin Heidelberg,  265-309 (1978)



%


\bibitem{Rudin} {Rudin, W.}, Functional analysis, second edition, ISBN 0-07-054236-8, Singapore (1991).

\bibitem{Schweizer1999} Schweizer, M., A guided tour through quadratic hedging approaches. No. 1999, 96. Discussion Papers, Interdisciplinary Research Project 373: Quantification and Simulation of Economic Processes, 1999.

\bibitem{Schweizer1995} Schweizer, M., \newblock{On the minimal martingale measure and the F\"ollmer-Schweizer decomposition.} Stochastic analysis and applications 13.5 (1995): 573-599.

\bibitem{strickerandyor1978} Stricker, C. and Yor, M., Calcul stochastique d\'{e}pendant d'un param\'{e}tre, Z. Wahrscheinlichkeits theorie verw. Gebiete 45, 109-133 (1978)

\bibitem{stricker1990} Stricker, C.,  Arbitrage et lois de martingale. Annales de l'institut Henri Poincaré (b) Probabilit\'es et Statistiques. Vol. 26. No. 3. Gauthier-Villars, 1990.



\bibitem{Yor} Yor,  M., \newblock{Grossissement d'une filtration et semi-martingales: th\'eor\`emes g\'en\'eraux} S\'eminaire de probabilit\'es, tome XII,
 Lecture Notes in Mathematics, Vol. 649, (1978), p.61-69.


\end{thebibliography}
\end{document}